\newtheorem{lemma}{Lemma}
\newtheorem{proposition}{Proposition}
\newtheorem{remark}{Remark}
\long\def\symbolfootnote[#1]#2{\begingroup%
\def\thefootnote{\fnsymbol{footnote}}\footnote[#1]{#2}\endgroup}
\newtheorem{theorem}{Theorem}
\newtheorem{definition}{Definition}
\newcommand{\dv}{\mathbf} % determenistic vector
\newcommand{\mc}{\mathcal} % determenistic vector
\newcommand{\mb}{\mathbf} % determenistic vector
\newcommand{\mkv}{-\!\!\!\!\minuso\!\!\!\!-}
\algnewcommand{\Inputs}[1]{%
  \State \textbf{input:}
%  \Statex   \hspace*{\algorithmicindent}
   \parbox[t]{.8\linewidth}{\raggedright #1}
}
\algnewcommand{\Initialize}[1]{%
  \State \textbf{initialization}
%  \Statex \hspace*{\algorithmicindent}
  \parbox[t]{.95\linewidth}{\raggedright #1}
}
\algnewcommand{\Outputs}[1]{%
  \State \textbf{output:}
%  \Statex   \hspace*{\algorithmicindent}
   \parbox[t]{.8\linewidth}{\raggedright #1}
}
\begin{document}
\fontencoding{OT1}\fontsize{10}{11}\selectfont

\pagenumbering{gobble}

\title{Distributed Information Bottleneck Method for Discrete and Gaussian Sources}

\author{
I\~naki Estella Aguerri $^{\dagger}$ \qquad \quad Abdellatif Zaidi $^{\dagger}$$^{\ddagger}$ \\   
\small{$^{\dagger}$ Mathematics and Algorithmic Sciences Lab.
France Research Center, Huawei Technologies, Boulogne-Billancourt, 92100, France\\
$^{\ddagger}$ Universit\'e Paris-Est, Champs-sur-Marne, 77454, France\\
\{\tt  inaki.estella@huawei.com, abdellatif.zaidi@u-pem.fr\}}
\vspace{-5mm}
} 

%, yigit.ugur@huawei.com\qquad \quad  Yi{\u{g}}it U{\u{g}}ur $^{\dagger}$$^{\ddagger}$

% make the title area
\maketitle
\begin{abstract} 
We study the problem of distributed information bottleneck, in which multiple encoders separately compress their  observations in a manner such that, collectively, the compressed signals preserve as much information as possible about another signal. The model generalizes Tishby's centralized  information bottleneck method to the setting of multiple distributed encoders. We establish single-letter characterizations of the information-rate region of this problem for both i) a class of discrete memoryless sources and ii) memoryless vector Gaussian sources. Furthermore, assuming a sum constraint on rate or complexity, for both models we develop Blahut-Arimoto type iterative algorithms that allow to compute optimal information-rate trade-offs, by iterating over a set of self-consistent equations. 
\end{abstract}

\IEEEpeerreviewmaketitle

%%%%%%%%%%%%%%%%%

\vspace{-4mm}
\section{Introduction}

The information bottleneck (IB) method was introduced by Tishby~\cite{Tishby99theinformation} as an information-theoretic principle for extracting the relevant information that some signal $Y \in \mathcal{Y}$ provides about another one, $X \in \mathcal{X}$, that is of interest.
%The approach has a clear connection with, and application to, problems of feature extraction and prediction, wherein one is interested in extracting features of $Y$ that are relevant for predicting $X$. Beyond data classification or clustering and prediction, it has found remarkable applications in the wider field of learning~\cite{Slonim:2000, Slonim01thepower,Slonim20122005}.
The approach has found remarkable applications in supervised and unsupervised learning problems such as classification, clustering and prediction, wherein one is interested in extracting the relevant features, i.e., $X$, of the available data $Y$ \cite{Slonim:2000, Slonim20122005}.
Perhaps key to the analysis, and development, of the IB method is its elegant connection with information-theoretic rate-distortion problems. Recent works show that this connection turns out to be useful also for a better understanding of  deep neural networks~\cite{Tishby:2015:ITW}. 
Other connections, that are more intriguing, exist also with seemingly unrelated problems such as hypothesis testing~\cite{Tian:IT:2008} or systems with privacy constraints~\cite{Caire:ISIT:2016}. 

Motivated by applications of the IB method to settings in which the relevant features about $X$ are to be extracted from separately encoded signals, we study the model shown in Figure~\ref{fig:Schm}. Here, $X$ is the signal to be predicted and $(Y_1,\hdots,Y_K)$ are correlated signals that could each be relevant to extract one or more features of $X$. The features could be distinct or redundant. We make the assumption that the signals $(Y_1,\hdots,Y_K)$ are  independent given $X$. This assumption holds in many practical scenarios. For example,  the reader may think of $(Y_1,...,Y_K)$ as being the results of $K$ clinical tests that are performed independently at different clinics and are used to diagnose a disease $X$. A third party (decoder or detector) has to decide without access to the original data. In general, at every encoder $k$ there is a tension among the \textit{complexity} of the encoding, measured by the minimum description length or rate $R_k$ at which the observation is compressed, and the information that the produced description, say $U_k$, provides about the signal $X$. The \textit{relevance} of $(U_1,\hdots,U_K)$ is measured in terms of the information that the descriptions collectively preserve about $X$; and is captured by Shannon's mutual information $I(U_1,\hdots,U_K;X)$. Thus, the performance of the entire system can be evaluated in terms of the tradeoff between the vector $(R_1,\hdots,R_K)$ of minimum description lengths and the mutual information $I(U_1,\hdots,U_K;X)$.

%----------------------------------------
\begin{figure}[t!]
\centering
\includegraphics[width=0.48\textwidth]{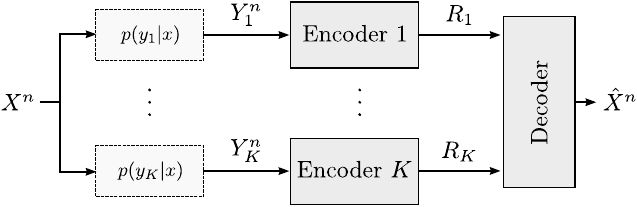}
\vspace{-3mm}
\caption{A model for distributed information bottleneck (D-IB).} 
\vspace{-6mm}
\label{fig:Schm}
\end{figure}
%----------------------------------------

In this paper, we study the aforementioned tradeoff among relevant information and complexity for the model shown in Figure~\ref{fig:Schm}. First, we establish a single-letter characterization of the information-rate region of this model for discrete memoryless sources. In doing so, we exploit its connection with the distributed Chief Executive Officer (CEO) source coding problem under logarithmic-loss distortion measure studied in~\cite{Courtade2014LogLoss}. Next, we extend this result to memoryless vector Gaussian sources. Here, we prove that Gaussian test channels are optimal, thereby generalizing a similar result of~\cite{GlobersonTishby:Techical:Gaussian} and~\cite{journals/jmlr/ChechikGTW05} for the case of a single encoder IB setup.

In a second part of this paper, assuming a sum constraint on rate or complexity,  we develop Blahut-Arimoto~\cite{Blahut:IT:1972} type iterative algorithms that allow to compute optimal tradeoffs between information and rate, for both discrete and vector Gaussian models. We do so through a variational formulation that allows the determination of the set of self-consistent equations satisfied by the stationary solutions. In the  Gaussian case, the algorithm reduces to an appropriate updating of the parameters of  noisy linear projections. Here as well, our algorithms can be seen as generalizations of those developed for the single-encoder IB method, for discrete sources in~\cite{Tishby99theinformation} and for Gaussian sources in~\cite{journals/jmlr/ChechikGTW05}; as well as a generalization of the Blahut-Arimoto algorithm proposed in \cite{UE-AZ17a} for the CEO source coding problem for $K=2$ and discrete sources, to  $K\geq 2$ encoders and for both discrete and Gaussian sources.

\textit{Notation:} Upper case letters  denote random variables, e.g., X;  lower case letters denote realizations of random variables, e.g., $x$; and calligraphic letters denote sets, e.g., $\mathcal{X}$. The cardinality of a set is denoted by $|\mc X|$. For a random variable $X$ with probability mass function (pmf) $P_{X}$, we use $P_{X}(x)=p(x)$, $x\in \mc X$ for short. Boldface upper case letters denote vectors or matrices, e.g., $\dv X$, where context  makes the distinction clear. For an integer $n\in \mathbb{N}$, we denote the set $[1,n]:=\{1,2,\ldots, n\}$.
 We denote by $D_{\mathrm{KL}}(P,Q)$ the Kullback-Leibler divergence between the pmfs $P$ and $Q$.
For a set of integers $\mc K\subseteq \mathds{N}$,  $X_{\mc K}$ denotes the set  $X_{\mc K}=\{X_k:k \in \mc K \}$. 
We denote the covariance of a zero-mean vector $\mathbf{X}$ by $\mathbf{\Sigma}_{\mathbf{x}}:=\mathrm{E}[\mathbf{XX}^H]$;  $\mathbf{\Sigma}_{\mathbf{x},\mathbf{y}}$ is the cross-correlation  $\mathbf{\Sigma}_{\mathbf{x},\mathbf{y}}:= \mathrm{E}[\mathbf{XY}^H]$, and the conditional correlation  of $\mathbf{X}$ given $\mathbf{Y}$ as $\mathbf{\Sigma}_{\mathbf{x}|\mathbf{y}}:= \mathbf{\Sigma}_{\mathbf{x}}-\mathbf{\Sigma}_{\mathbf{x},\mathbf{y}}\mathbf{\Sigma}_{\mathbf{y}}^{-1}\mathbf{\Sigma}_{\mathbf{y},\mathbf{x}}$.

\section{System Model}\label{sec:System}

Consider the discrete memoryless D-IB model shown  in Figure~\ref{fig:Schm}. Let $\{X_{i}, Y_{1,i},\ldots, Y_{K,i}\}_{i=1}^n = (X^n,Y_1^n,\ldots, Y_K^n)$ be a sequence of $n$ independent, identically distributed (i.i.d.) random variables with finite alphabets $\mathcal{X},\mathcal{Y}_k$, $k\in\mathcal{K}:= \{1,\ldots, K\}$ and joint pmf $P_{X,Y_1,\ldots,Y_K}$. Throughout this paper, we make the assumption that the observations at the encoders are independent conditionally on $X$, i.e., 
\begin{align}
Y_{k,i} \mkv X_{i} \mkv Y_{\mathcal{K}/k,i}\quad \text{for } k \in \mc K\: \text{and} \: i \in [1,n]. 
\label{eq:MKChain_pmf}
\end{align}
Encoder $k\in\mathcal{K}$ maps the observed sequence $Y_k^n$ to an index $J_k:= \phi_k(Y_k^n)$, where  $\phi_k: \mathcal{Y}_k^n\rightarrow \mathcal{M}_k$ is a given map and $\mathcal{M}_k:= [1,M_{k}^{(n)}]$. The index $J_k$ is sent error-free to the decoder. 
The decoder collects all indices $J_{\mathcal{K}}:= (J_1,\ldots,J_K)$ and then estimates the source $X^n$ as $\hat{X}^n=g^{(n)}(J_{\mc J})$, where $g^{(n)}: \mathcal{M}_1\times \cdots \times \mathcal{M}_L\rightarrow \mathcal{\hat{X}}^n$ is some decoder map and $\mathcal{\hat{X}}^n$ is the reconstruction alphabet of the source. 

The quality of the reconstruction is measured in terms of the $n$-letter \textit{relevant information} between the unobserved source $X^n$ and its reconstruction at the decoder $\hat{X}^n$, given by
\begin{align}
\Delta^{(n)} := \frac{1}{n} I(X^n;g^{(n)}(\phi_1^{(n)}(Y_1^n),\ldots,\phi_K^{(n)}(Y_K^n) )).
\end{align}

%\begin{definition}
%A code of blocklength $n$ for the  D-IB model consist of $K$ encoding functions 
%$\phi_k^{(n)}: \mathcal{Y}_k^n\rightarrow \mathcal{M}_k$, $k\in \mc K$,
%and a decoding function 
%$ g^{(n)}: \mc M_1\times \cdots\times  \mc M_K\rightarrow \mathcal{\hat{X}}^n$.
%\end{definition}

\begin{definition}
A tuple $(\Delta,R_1,\ldots, R_K)$ is said to be achievable for the D-IB model if there exists a blocklength $n$, encoder maps $\phi^{(n)}_k$ for $k\in\mathcal{K}$, and a decoder map $g^{(n)}$, such that
\begin{align}
R_k&\geq \frac{1}{n}\log M_{k}^{(n)}, \;k\in \mc K, \;\;\text{and}\quad
\Delta\leq \frac{1}{n}I(X^n;\hat{X}^n).\label{eq:DeltaConstr}
 \end{align}
where 
%\begin{align}
$\hat{X}^n=g^{(n)}(\phi_1^{(n)}(Y_1^n),\ldots,\phi_K^{(n)}(Y_K^n) )$.
%\end{align}
The information-rate region $\mathcal{R}_{\mathrm{IB}}$ is given by the closure of all achievable rates tuples $(\Delta,R_1,\ldots, R_K)$.
\end{definition}
We are interested in characterizing the  region $\mathcal{R}_{\mathrm{IB}}$.
Due to space limitations, some results are only outlined or provided without proof. We refer to \cite{Estella:IZS:2017:Proofs} for a detailed version.

%\subsection{A class of D-IB observation models}
%In this work, we establish the information-distortion region of the following class of discrete memoryless D-IB sources.  In this class, the observations at the encoders are independent conditionally on the source of interest $X^n$. That is,  
%$p(x,y_1,\ldots,y_K) = p(x)\prod_{k=1}^Kp(y_{k}|x)$,
%or, equivalently, the following Markov chain holds,
%\begin{align}
%Y_{k,i} \mkv X_{\mathcal{L},i} \mkv Y_{\mathcal{K}/k,i}\quad \text{for } k \in \mc K\: \text{and} \: i \in [1,n]. 
%\label{eq:MKChain_pmf}
%\end{align}

\section{Information-Rate Region Characterization}
%In this section, we provide the characterization of the information-rate region for DMS and  vector Gaussian sources.

%\subsection{Discrete Memoryless D-IB}
In this section we  characterize the information-rate region $\mathcal{R}_{\mathrm{IB}}$ for a discrete memoryless D-IB model. 
It is well known that the IB problem is essentially a source-coding problem where the distortion measure is of logarithmic loss  type \cite{HT07}. Likewise, the D-IB model of Figure \ref{fig:Schm} is essentially a $K$-encoder CEO source coding problem under logarithmic loss (log-loss) distortion measure.
The log-loss distortion between sequences is defined as
\begin{align}
d_{\mathrm{LL}}(x^n,\hat{x}^n):= -\frac{1}{n}\log\left(\frac{1}{\hat{x}^n(x^n)}\right),
\end{align}
where $\hat{x}^n = s(x^n|j_{\mc K})$ and $s$ is a pmf on $\mc X^n$.

The rate-distortion region of the $K$-encoder CEO source coding problem under log-loss, with $K\geq 2$ which we denote hereafter as $\mc{RD}_{\mathrm{CEO}}$, has been established recently in \cite[Theorem 10]{Courtade2014LogLoss} for the case in which the Markov chain \eqref{eq:MKChain_pmf} holds. 

We first state the following proposition, the proof of which is easy and omitted for brevity.
\begin{proposition}\label{prop:Eqreg}
A tuple $(\Delta,R_1,\ldots, R_K)\in \mc R_{\mathrm{IB}}$ if and only if $(H(X)-\Delta,R_1,\ldots, R_K)\in \mc {RD}_{\mathrm{CEO}}$.
\end{proposition}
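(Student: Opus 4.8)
The plan is to establish the equivalence of the two problems at the operational level by exhibiting a one-to-one correspondence between schemes for the D-IB model and schemes for the CEO problem under log-loss, and then tracking how the relevant-information objective $\Delta^{(n)}$ transforms into an expected log-loss distortion. The key algebraic identity I would use is the standard one relating mutual information to log-loss: for any (possibly stochastic) reconstruction $\hat{X}^n = s(\cdot\,|\,j_{\mathcal K})$ that is a function of the messages $J_{\mathcal K}$, we have
\begin{align}
\mathrm{E}[d_{\mathrm{LL}}(X^n,\hat X^n)] = \frac{1}{n}H(X^n\,|\,J_{\mathcal K},\hat X^n) \;\geq\; \frac{1}{n}H(X^n\,|\,J_{\mathcal K}),
\end{align}
with equality when $s(x^n|j_{\mathcal K})$ is chosen to be the true posterior $P_{X^n|J_{\mathcal K}}(x^n|j_{\mathcal K})$. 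Since $\frac1nI(X^n;J_{\mathcal K}) = H(X) - \frac1nH(X^n|J_{\mathcal K})$ (using that $X^n$ is i.i.d.\ so $\frac1nH(X^n)=H(X)$), minimizing the expected log-loss distortion over decoders is exactly the same as maximizing the relevant information $\frac1nI(X^n;J_{\mathcal K})$, and the optimal value of the former is $H(X)$ minus the optimal value of the latter.

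Concretely, I would argue both inclusions. Given encoder maps $\phi_1^{(n)},\ldots,\phi_K^{(n)}$ achieving $(\Delta,R_1,\ldots,R_K)\in\mathcal R_{\mathrm{IB}}$, I keep the same encoders for the CEO problem and choose the log-loss reconstruction $\hat x^n(\cdot) = P_{X^n|J_{\mathcal K}}(\cdot\,|\,j_{\mathcal K})$; by the identity above this yields expected distortion $\frac1nH(X^n|J_{\mathcal K}) = H(X) - \frac1nI(X^n;J_{\mathcal K}) \le H(X)-\Delta$, so $(H(X)-\Delta,R_1,\ldots,R_K)$ is achievable for the CEO problem, hence in $\mathcal{RD}_{\mathrm{CEO}}$ after taking closures. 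Conversely, given a CEO scheme achieving distortion $D = H(X)-\Delta$ with some decoder $s$, the same encoders together with $\hat X^n = g^{(n)}(J_{\mathcal K})$ interpreted appropriately give $\frac1nI(X^n;\hat X^n) \ge \frac1nI(X^n;J_{\mathcal K}) = H(X) - \frac1nH(X^n|J_{\mathcal K}) \ge H(X) - \mathrm{E}[d_{\mathrm{LL}}(X^n,\hat X^n)] \ge H(X)-D = \Delta$, where the first inequality is the data-processing inequality for the Markov chain $X^n - J_{\mathcal K} - \hat X^n$. Taking closures on both sides completes the equivalence.

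I would also need to check the minor bookkeeping points: that the rate constraints $R_k \ge \frac1n\log M_k^{(n)}$ are literally identical in the two problem definitions (they are, by construction), that the conditional-independence Markh chain \eqref{eq:MKChain_pmf} assumed here is exactly the hypothesis under which $\mathcal{RD}_{\mathrm{CEO}}$ is characterized in \cite[Theorem 10]{Courtade2014LogLoss} (so the statement is non-vacuous in the regime we care about), and that the closure operations are compatible — i.e.\ the affine map $(\Delta,R_1,\ldots,R_K)\mapsto(H(X)-\Delta,R_1,\ldots,R_K)$ is a homeomorphism, so it commutes with taking closures. None of these is a genuine obstacle. The only place requiring a little care, and hence the "main step," is the direction from D-IB to CEO: one must be sure that allowing a general stochastic kernel $s(x^n|j_{\mathcal K})$ as the log-loss reconstruction is what makes the expected distortion collapse exactly to $\frac1nH(X^n|J_{\mathcal K})$, and that this kernel is an admissible decoder in the CEO formulation given in the excerpt. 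Since the CEO decoder map is explicitly allowed to output an element of a reconstruction alphabet that is a pmf on $\mathcal X^n$, this is immediate, and the proposition follows.
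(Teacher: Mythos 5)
The paper itself omits the proof of Proposition~\ref{prop:Eqreg} (``the proof of which is easy and omitted for brevity''), so there is no line-by-line comparison to make; your route --- an operational correspondence between D-IB schemes and CEO schemes, mediated by the standard identity tying expected log-loss to conditional entropy --- is exactly the argument the paper has in mind. However, two of your steps need repair. The first is cosmetic: your opening display is not an identity. For a reconstruction $\hat X^n = s(\cdot\,|\,J_{\mathcal K})$ one has
$\mathrm{E}[d_{\mathrm{LL}}(X^n,\hat X^n)] = \tfrac1n H(X^n|J_{\mathcal K}) + \tfrac1n \mathrm{E}\,[D_{\mathrm{KL}}(P_{X^n|J_{\mathcal K}} , s(\cdot|J_{\mathcal K}))]$,
a conditional cross-entropy; it does not equal $\tfrac1n H(X^n|J_{\mathcal K},\hat X^n)$, which coincides with $\tfrac1n H(X^n|J_{\mathcal K})$ for \emph{every} decoder that is a deterministic map of the index, so your claimed equality would assert that every decoder is already optimal. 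The inequality $\ge \tfrac1n H(X^n|J_{\mathcal K})$ and the equality condition (take $s$ to be the true posterior) are what you actually use, and those are correct.

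The second issue is a step that fails as written. In the CEO-to-IB direction you invoke $\tfrac1n I(X^n;\hat X^n) \ge \tfrac1n I(X^n;J_{\mathcal K})$ ``by the data-processing inequality for $X^n - J_{\mathcal K} - \hat X^n$''; data processing gives the \emph{reverse} inequality $I(X^n;\hat X^n) \le I(X^n;J_{\mathcal K})$, so your chain does not close. The fix is to choose the D-IB decoder $g^{(n)}$ to be information-lossless: take $g^{(n)}$ injective on $\mathcal M_1\times\cdots\times\mathcal M_K$, or let it output the posterior $P_{X^n|J_{\mathcal K}}(\cdot\,|\,J_{\mathcal K})$, which is a sufficient statistic of $J_{\mathcal K}$ for $X^n$. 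Either choice gives $I(X^n;\hat X^n) = I(X^n;J_{\mathcal K})$, after which the rest of your chain, $= H(X) - \tfrac1n H(X^n|J_{\mathcal K}) \ge H(X) - \mathrm{E}[d_{\mathrm{LL}}(X^n,\hat X^n)] \ge H(X) - D = \Delta$, goes through; the D-IB definition places no restriction on the reconstruction alphabet $\hat{\mathcal X}^n$, so this decoder is admissible. With these two corrections your proof is complete, and the remaining bookkeeping (identical rate constraints, the affine map commuting with closures, the shared Markov hypothesis \eqref{eq:MKChain_pmf}) is handled correctly.
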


Proposition \ref{prop:Eqreg} implies that \cite[Theorem 10]{Courtade2014LogLoss} can be applied to characterize the information-rate region $\mathcal{R}_{\mathrm{IB}}$ as given next.
\begin{theorem}\label{th:MK_C_Main}
In the case in which the Markov chain \eqref{eq:MKChain_pmf} holds, the rate-information region $\mc R_{\mathrm{IB}}$ of the D-IB model is given by the set of all tuples $(\Delta, R_1,\ldots, R_K)$ which satisfy for  $\mathcal{S} \subseteq \mathcal{K}$
\begin{align}\label{eq:MK_C_Main}
\Delta\leq \sum_{k\in \mathcal{S}} [R_k\!-\!I(Y_{k};U_{k}|X,Q)]  + I(X;U_{\mathcal{S}^c}|Q),
\end{align}
for some joint pmf
$p(q) p(x)\prod_{k=1}^K p(y_{k}|x)\prod_{k=1}^{K}p(u_k|y_k,q)$.
\end{theorem}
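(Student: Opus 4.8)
The plan is to obtain Theorem~\ref{th:MK_C_Main} directly from Proposition~\ref{prop:Eqreg} together with the known single-letter characterization of the $K$-encoder CEO rate-distortion region under log-loss, $\mc{RD}_{\mathrm{CEO}}$, given in \cite[Theorem~10]{Courtade2014LogLoss}; the only genuine content is then a change of variables $D \leftrightarrow H(X)-\Delta$ and an elementary simplification of the mutual-information terms. First I would recall why Proposition~\ref{prop:Eqreg} holds: for any fixed encoder maps the log-loss-optimal decoder is the posterior $s(x^n|j_{\mc K})=P_{X^n|J_{\mc K}}(x^n|j_{\mc K})$, for which $\mathrm{E}[d_{\mathrm{LL}}(X^n,\hat{X}^n)]=\frac1n H(X^n|J_{\mc K})$, so the least achievable distortion equals $H(X)-\frac1n I(X^n;J_{\mc K})$; conversely $\Delta^{(n)}=\frac1n I(X^n;\hat{X}^n)$ is maximized by taking $\hat{X}^n$ to be (a statistic equivalent to) $J_{\mc K}$. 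Hence $(\Delta,R_1,\ldots,R_K)\in\mc R_{\mathrm{IB}}$ if and only if $(H(X)-\Delta,R_1,\ldots,R_K)\in\mc{RD}_{\mathrm{CEO}}$.

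Next I would invoke \cite[Theorem~10]{Courtade2014LogLoss}: under the conditional-independence assumption \eqref{eq:MKChain_pmf}, the tuple $(D,R_1,\ldots,R_K)$ lies in $\mc{RD}_{\mathrm{CEO}}$ if and only if there is a pmf $p(q)p(x)\prod_{k=1}^K p(y_k|x)\prod_{k=1}^K p(u_k|y_k,q)$ such that, for every $\mc S\subseteq\mc K$,
\begin{align}
\sum_{k\in\mc S} R_k \;\geq\; I(Y_{\mc S};U_{\mc S}|X,Q)+H(X|U_{\mc S^c},Q)-D. \label{eq:ceoform}
\end{align}
Since $p(y_{\mc S},u_{\mc S}|x,q)=\prod_{k\in\mc S}p(y_k|x)p(u_k|y_k,q)$, conditionally on $(X,Q)$ the pairs $\{(Y_k,U_k)\}_{k\in\mc S}$ are mutually independent, so $I(Y_{\mc S};U_{\mc S}|X,Q)=\sum_{k\in\mc S}I(Y_k;U_k|X,Q)$. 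Substituting $D=H(X)-\Delta$ into \eqref{eq:ceoform} and using $H(X|U_{\mc S^c},Q)=H(X)-I(X;U_{\mc S^c}|Q)$ (valid because $Q$ is independent of $X$), rearranging yields
\begin{align}
\Delta \;\leq\; \sum_{k\in\mc S}\big[R_k-I(Y_k;U_k|X,Q)\big]+I(X;U_{\mc S^c}|Q) \label{eq:final}
\end{align}
for all $\mc S\subseteq\mc K$, which is exactly \eqref{eq:MK_C_Main}. As $\Delta\mapsto H(X)-\Delta$ is an affine bijection it preserves closure and convexity, so closedness of $\mc R_{\mathrm{IB}}$ needs no separate argument, and the cardinality bound on $Q$ transfers verbatim.

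In this reduction there is no hard step: the difficulty is entirely outsourced to \cite[Theorem~10]{Courtade2014LogLoss}, whose converse rests on a single-letterization tailored to log-loss and whose achievability uses a Berger--Tung/Wyner--Ziv type scheme. The one point that warrants attention is checking that the hypotheses match precisely --- that \eqref{eq:MKChain_pmf} is the same Markov structure assumed in \cite{Courtade2014LogLoss}, and that $\Delta^{(n)}$ indeed equals $H(X)$ minus the minimum expected log-loss over decoder maps --- after which \eqref{eq:final} is immediate. For a self-contained account I would instead transcribe the converse of \cite{Courtade2014LogLoss}, replacing $\frac1n H(X^n|\hat{X}^n)$ by $H(X)-\Delta^{(n)}$ throughout, and keep the achievability scheme unchanged since it already outputs a posterior-type reconstruction.
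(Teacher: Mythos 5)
Your proposal is correct and follows exactly the route the paper takes: reduce to the $K$-encoder CEO problem under log-loss via Proposition~\ref{prop:Eqreg}, invoke \cite[Theorem~10]{Courtade2014LogLoss}, and perform the substitution $D=H(X)-\Delta$ together with the conditional-independence simplification of $I(Y_{\mathcal{S}};U_{\mathcal{S}}|X,Q)$. The paper leaves these bookkeeping steps implicit, and your write-up supplies them correctly.
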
 

%\begin{remark}
%It follows from \cite[Theorem 10]{Courtade2014LogLoss} that $\mc R$ is achievable through the application of the Berger-Tung scheme. In addition, it follows that the optimal reconstruction function is  $s(x^n|j_{\mc K}) = p_{X^n|J_{\mc K}}(x^n|j_{\mc K})$ and that it is sufficient to restrict $\hat{X}^n$ to be the product distribution, i.e., $p(x^n|j_{\mc K}) = \prod_{i=1}^n p(x_i|j_{\mc K})$, in order to achieve any tuple in the $\mc R$ region.
%\end{remark}

\subsection{Memoryless Vector Gaussian D-IB}\label{ssec:Gauss}

Consider now the following memoryless vector Gaussian D-IB problem. In this model, the source vector $\mathbf{X}\in \mathds{C}^{N}$ is Gaussian and has zero mean and  covariance matrix
$\mathbf{\Sigma}_{\mb x}$, i.e., $\mathbf{X}\sim \mc{CN}(\dv 0, \mathbf{\Sigma}_{\mb x})$.
 Encoder $k$, $k\in \mc K$, observes a noisy observation $\mb Y_{k}\in \mathds{C}^{M_k}$, that is given by
\begin{equation}
\mathbf{Y}_k = \mathbf{H}_{k}\mathbf{X}+\mathbf{N}_k,
\label{mimo-gaussian-model}
\end{equation}
where $\mathbf{H}_{k}\in \mathds{C}^{M_k\times N}$ is the channel connecting the source to encoder $k$, and $\mathbf{N}_k\in\mathds{C}^{M_k}$, $k\in \mc K$, is the noise vector at encoder $k$, assumed to be Gaussian, with zero-mean and covariance matrix $\mathbf{\Sigma}_{\mb n_k}$, and independent from all other noises and the source vector $\dv X$.

The studied Gaussian model satisfies the Markov chain~\eqref{eq:MKChain_pmf}; and thus, 
the result of Theorem~\ref{th:MK_C_Main}, which can be extended to continuous sources using standard techniques, characterizes the information-rate region of this model. The following theorem characterizes  $\mc R_{\mathrm{IB}}$ for the vector Gaussian model, shows that the optimal test channels $P_{U_k|Y_k}$, $k\in \mc K$, are Gaussian and that there is not need for time-sharing, i.e., $Q=\emptyset$.

\begin{theorem}
\label{th:GaussSumCap}
If $(\dv X,\dv Y_1,\ldots, \dv Y_K)$ are jointly Gaussian as in \eqref{mimo-gaussian-model}, the information-rate region $\mc R$ is given by the set of all  tuples $(\Delta, R_1,\ldots,R_L)$ satisfying that for all $\mathcal{S} \subseteq \mathcal{K}$
\begin{align}
\Delta \leq
 \sum_{k\in \mathcal{S}}\left[R_k+\log|\mathbf{I}-\mathbf{B}_k|\right] 
 + \log 
\left|\sum_{k\in\mathcal{S}^{c}}\mathbf{\bar{H}}_{k}^{H}
\mathbf{B}_{k}
\mathbf{\bar{H}}_{k}+\mathbf{I}\right|\nonumber
,%\label{eq:GaussSumCap}
\end{align}
for some $\mathbf{0}\preceq \mathbf{B}_k\preceq \mathbf{I}$ and where $\mathbf{\bar{H}}_{k} = \mathbf{\Sigma}_{\mb n_k}^{-1/2} \mathbf{H}_{k} \mathbf{\Sigma}_{\mb x}^{1/2}$. In addition, the information-rate tuples in $\mc R_{\mathrm{IB}}^{\mathrm{G}}$ are achievable with $Q= \emptyset$ and $p^{*}(\dv u_k|\dv y_k,q) = \mc{CN}(\dv y_k, \dv \mathbf{\Sigma}_{\mb n_k}^{1/2}(\dv B_k-\dv I)\mathbf{\Sigma}_{\mb n_k}^{1/2}  ) $.
\end{theorem}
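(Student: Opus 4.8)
\emph{Proof idea.} The plan is to derive both directions from the single‑letter region of Theorem~\ref{th:MK_C_Main}, first extended to continuous alphabets by the standard discretization argument mentioned after it. For \emph{achievability} I would take $Q=\emptyset$ and, for each $k\in\mc K$, the Gaussian test channel $\mathbf{U}_k=\mathbf{Y}_k+\mathbf{Z}_k$ with $\mathbf{Z}_k\sim\mc{CN}(\mathbf{0},\mathbf{\Omega}_k)$, the $\mathbf{Z}_k$ mutually independent and independent of $(\mathbf{X},\mathbf{Y}_1,\dots,\mathbf{Y}_K)$; this respects $U_k\mkv Y_k\mkv X$ and keeps the $U_k$ conditionally independent given $\mathbf{X}$, so it is admissible in~\eqref{eq:MK_C_Main}. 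Using $\mathbf{Y}_k\,|\,\mathbf{X}\sim\mc{CN}(\mathbf{H}_k\mathbf{X},\mathbf{\Sigma}_{\mathbf{n}_k})$ one gets $I(\mathbf{Y}_k;\mathbf{U}_k|\mathbf{X})=\log|\mathbf{\Sigma}_{\mathbf{n}_k}+\mathbf{\Omega}_k|-\log|\mathbf{\Omega}_k|$, and since $(\mathbf{X},\mathbf{U}_{\mc S^c})$ is jointly Gaussian with $\mathbf{\Sigma}_{\mathbf{x}|\mathbf{u}_{\mc S^c}}^{-1}=\mathbf{\Sigma}_{\mathbf{x}}^{-1}+\sum_{k\in\mc S^c}\mathbf{H}_k^H(\mathbf{\Sigma}_{\mathbf{n}_k}+\mathbf{\Omega}_k)^{-1}\mathbf{H}_k$, one gets $I(\mathbf{X};\mathbf{U}_{\mc S^c})=\log|\mathbf{\Sigma}_{\mathbf{x}}\mathbf{\Sigma}_{\mathbf{x}|\mathbf{u}_{\mc S^c}}^{-1}|$. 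The change of variable $\mathbf{B}_k:=\mathbf{\Sigma}_{\mathbf{n}_k}^{1/2}(\mathbf{\Sigma}_{\mathbf{n}_k}+\mathbf{\Omega}_k)^{-1}\mathbf{\Sigma}_{\mathbf{n}_k}^{1/2}$, which sweeps out exactly $\{\mathbf{0}\preceq\mathbf{B}_k\preceq\mathbf{I}\}$ as $\mathbf{\Omega}_k\succeq\mathbf{0}$ (with inverse $\mathbf{\Omega}_k=\mathbf{\Sigma}_{\mathbf{n}_k}^{1/2}(\mathbf{B}_k^{-1}-\mathbf{I})\mathbf{\Sigma}_{\mathbf{n}_k}^{1/2}$), together with $\det(\mathbf{I}+\mathbf{A}\mathbf{B})=\det(\mathbf{I}+\mathbf{B}\mathbf{A})$, turns these into $R_k+\log|\mathbf{I}-\mathbf{B}_k|$ and $\log|\mathbf{I}+\sum_{k\in\mc S^c}\bar{\mathbf{H}}_k^H\mathbf{B}_k\bar{\mathbf{H}}_k|$, i.e.\ the claimed region, and it also reads off the optimal $p^{*}(\mathbf{u}_k|\mathbf{y}_k)$.

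\emph{Converse.} Fix an admissible $(Q,\{U_k\})$ and a value $Q=q$. Since conditionally on $\mathbf{X}$ the map $\mathbf{Y}_k\mapsto\mathbf{N}_k$ is affine with $\mathbf{N}_k\sim\mc{CN}(\mathbf{0},\mathbf{\Sigma}_{\mathbf{n}_k})$, we have $I(\mathbf{Y}_k;\mathbf{U}_k|\mathbf{X},q)=\log|\pi e\,\mathbf{\Sigma}_{\mathbf{n}_k}|-h(\mathbf{Y}_k|\mathbf{X},\mathbf{U}_k,q)$ and $I(\mathbf{X};\mathbf{U}_{\mc S^c}|q)=\log|\pi e\,\mathbf{\Sigma}_{\mathbf{x}}|-h(\mathbf{X}|\mathbf{U}_{\mc S^c},q)$. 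Let $\mathbf{D}_k(q):=\mathrm{E}[(\mathbf{Y}_k-\mathrm{E}[\mathbf{Y}_k|\mathbf{X},\mathbf{U}_k,q])(\mathbf{Y}_k-\mathrm{E}[\mathbf{Y}_k|\mathbf{X},\mathbf{U}_k,q])^H]$ be the conditional MMSE matrix and set $\mathbf{B}_k(q):=\mathbf{I}-\mathbf{\Sigma}_{\mathbf{n}_k}^{-1/2}\mathbf{D}_k(q)\mathbf{\Sigma}_{\mathbf{n}_k}^{-1/2}$; because $\mathbf{0}\preceq\mathbf{D}_k(q)\preceq\mathrm{Cov}(\mathbf{Y}_k|\mathbf{X})=\mathbf{\Sigma}_{\mathbf{n}_k}$ one has $\mathbf{0}\preceq\mathbf{B}_k(q)\preceq\mathbf{I}$, and for the Gaussian test channels above this $\mathbf{B}_k$ coincides with the one used in achievability. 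The sum‑rate terms then follow at once, since a Gaussian maximizes differential entropy at fixed covariance: $h(\mathbf{Y}_k|\mathbf{X},\mathbf{U}_k,q)\le\log|\pi e\,\mathbf{D}_k(q)|=\log|\pi e\,\mathbf{\Sigma}_{\mathbf{n}_k}|+\log|\mathbf{I}-\mathbf{B}_k(q)|$, whence $R_k-I(\mathbf{Y}_k;\mathbf{U}_k|\mathbf{X},q)\le R_k+\log|\mathbf{I}-\mathbf{B}_k(q)|$.

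\textbf{The main obstacle} is to bound the remaining ``cross'' term with these same matrices, i.e.\ to prove $I(\mathbf{X};\mathbf{U}_{\mc S^c}|q)\le\log|\mathbf{I}+\sum_{k\in\mc S^c}\bar{\mathbf{H}}_k^H\mathbf{B}_k(q)\bar{\mathbf{H}}_k|$; after substituting $\bar{\mathbf{H}}_k=\mathbf{\Sigma}_{\mathbf{n}_k}^{-1/2}\mathbf{H}_k\mathbf{\Sigma}_{\mathbf{x}}^{1/2}$ this is the entropy lower bound $h(\mathbf{X}|\mathbf{U}_{\mc S^c},q)\ge\log|\pi e\,(\mathbf{\Sigma}_{\mathbf{x}}^{-1}+\sum_{k\in\mc S^c}\mathbf{H}_k^H\mathbf{\Sigma}_{\mathbf{n}_k}^{-1}(\mathbf{\Sigma}_{\mathbf{n}_k}-\mathbf{D}_k(q))\mathbf{\Sigma}_{\mathbf{n}_k}^{-1}\mathbf{H}_k)^{-1}|$, a Gaussian‑extremality statement which is the real content of the theorem. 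I would establish it by a matrix Fisher‑information / de~Bruijn‑identity argument along a Gaussian‑perturbation path of $\mathbf{X}$: (i) on the Gaussian channel~\eqref{mimo-gaussian-model}, a complementary‑Fisher‑information identity (with a standard vanishing Gaussian perturbation to take care of regularity) bounds the contribution of $\mathbf{U}_k$ to the conditional Fisher information of $\mathbf{X}$ by $\mathbf{H}_k^H\mathbf{\Sigma}_{\mathbf{n}_k}^{-1}(\mathbf{\Sigma}_{\mathbf{n}_k}-\mathbf{D}_k(q))\mathbf{\Sigma}_{\mathbf{n}_k}^{-1}\mathbf{H}_k$; (ii) the conditional independence of the $U_k$ given $\mathbf{X}$ lets these contributions add, so $\mathbf{J}(\mathbf{X}|\mathbf{U}_{\mc S^c},q)\preceq\mathbf{\Sigma}_{\mathbf{x}}^{-1}+\sum_{k\in\mc S^c}\mathbf{H}_k^H\mathbf{\Sigma}_{\mathbf{n}_k}^{-1}(\mathbf{\Sigma}_{\mathbf{n}_k}-\mathbf{D}_k(q))\mathbf{\Sigma}_{\mathbf{n}_k}^{-1}\mathbf{H}_k$; (iii) the entropy‑power / Stam inequality $h(\mathbf{X}|\mathbf{V})\ge\log|\pi e\,\mathbf{J}(\mathbf{X}|\mathbf{V})^{-1}|$ converts this into the desired bound. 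This is the same technical core as the single‑encoder Gaussian‑IB converses of \cite{GlobersonTishby:Techical:Gaussian,journals/jmlr/ChechikGTW05} and of vector Gaussian CEO‑type converses; see \cite{Estella:IZS:2017:Proofs} for details. A minor point is that $\mathbf{D}_k(q)$ does not depend on $\mc S$, so one single family $\{\mathbf{B}_k(q)\}$ serves all subsets $\mc S\subseteq\mc K$ simultaneously, and the whole bound is tight at the Gaussian achievable solution.

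\emph{Eliminating the time‑sharing variable.} Finally $\Delta$ is at most the $Q$‑average of the per‑$q$ bounds; since $\mathbf{B}\mapsto\log|\mathbf{I}-\mathbf{B}|$ and $(\mathbf{B}_1,\dots,\mathbf{B}_K)\mapsto\log|\mathbf{I}+\sum_{k}\bar{\mathbf{H}}_k^H\mathbf{B}_k\bar{\mathbf{H}}_k|$ are concave on $\{\mathbf{0}\preceq\cdot\preceq\mathbf{I}\}$, Jensen's inequality lets one replace $\mathbf{B}_k(Q)$ by $\mathrm{E}_Q[\mathbf{B}_k(Q)]\in[\mathbf{0},\mathbf{I}]$. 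Hence $Q=\emptyset$ is optimal, the converse region coincides with the achievable one, and the extremal test channels are Gaussian, as claimed.
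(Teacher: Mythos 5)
Your proposal is correct and follows essentially the same route as the paper's proof: defining $\mathbf{B}_{k,q}$ through the conditional MMSE matrix $\mathrm{mmse}(\mathbf{Y}_k|\mathbf{X},\mathbf{U}_k,q)$, bounding the rate terms by the Gaussian max-entropy (mmse) upper bound on $h(\mathbf{Y}_k|\mathbf{X},\mathbf{U}_k,q)$, bounding the cross term $I(\mathbf{X};\mathbf{U}_{\mc S^c}|q)$ via the conditional Fisher information together with a de~Bruijn-type MMSE--Fisher identity whose cross terms vanish by the conditional-independence Markov chain, removing $Q$ by concavity of $\log\det$ and Jensen, and achieving the region with the additive Gaussian test channels. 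The only cosmetic difference is that you state the Fisher-information step as an inequality $\mathbf{J}(\mathbf{X}|\mathbf{U}_{\mc S^c},q)\preceq\mathbf{\Sigma}_{\mb x}^{-1}+\sum_{k\in\mc S^c}\mathbf{H}_k^H\mathbf{B}_{k,q}\mathbf{H}_k$, while the paper establishes it as an exact equality; either suffices for the converse.
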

\begin{proof}
An outline of the proof is given in Appendix~\ref{app:GaussSumCap}.
\end{proof}

%
%\begin{remark}
%Theorem~\ref{th:GaussSumCap} extends the result of~\cite{GlobersonTishby:Techical:Gaussian} and~\cite{journals/jmlr/ChechikGTW05} for the single encoder IB problem under Gaussian vector sources to the case of the D-IB problem with $K$ encoders.
%\end{remark}

\section{Computation of the Information Rate Region under Sum-Rate Constraint}

In this section, we describe an iterative Blahut-Arimoto (BA)-type algorithm to compute the pmfs $P_{U_k|Y_k}$, $k\in \mc K$,  that maximize information $\Delta$  under sum-rate constraint, i.e., $R_{\mathrm{sum}} := \sum_{k=1}^KR_k$, for tuples $(\Delta,R_1,\ldots, R_K)$ in $\mc R_{\mathrm{IB}}$. From Theorem~\ref{th:MK_C_Main} we have:
%\footnote{For convenience, we define the region in terms of its information-rate function, and the time-sharing $Q$ is substituted by the convex hull operation.
%}
\begin{align}
\mc R_{\mathrm{sum}}:= \textrm{convex-hull}\{(\Delta,R_{\mathrm{sum}}):\Delta\leq \Delta_{\mathrm{sum}}( R_{\mathrm{sum}} )\},
\end{align}
where we define the information-rate function
\begin{align}
\Delta_{\mathrm{sum}}(R) :=\! \max_{\dv P}  \min\left\{I(X; U_{\mc K}),R - \sum_{k=1}^KI(Y_k;U_k|X)\right\},\nonumber
\end{align}
and where the optimization is over the set of $K$ conditional pmfs $P_{U_k|Y_k}$, $k\in \mc K$, which, for short, we define as
\begin{align}
\dv P := \{P_{U_1|Y_1},\ldots, P_{U_K|Y_K}\}.
\end{align}

Next proposition  provides a  characterization of the pairs $(\Delta,R_{\mathrm{sum}})\in \mc R_{\mathrm{sum}}$ in terms of a parameter $s\geq 0$.
\begin{proposition}\label{prop:param}
Each tuple $(\Delta,R_{\mathrm{sum}})$ on the information rate curve  $\Delta = \Delta_{\mathrm{sum}}(R_{\mathrm{sum}})$, can be obtained for some 
$s \geq 0$, as $(\Delta_{s}, R_{s})$, parametrically defined by 
\begin{align}
&(1+s)\Delta_{s} = (1+sK)H(X) +  s R_{s} - \min_{\dv P}F_{s}(\dv P),\label{eq:Dparam}\\
&R_{s} = I(Y_{\mc K};U_{\mc K}^*) + \sum_{k=1}^K [I(Y_k;U_k^*) - I(X;U_k^*)],\label{eq:R1param}
\end{align}
where $\dv P^*$ are the pmfs yielding the minimum in \eqref{eq:Dparam} and 
\begin{align}
F_{s}(\dv P) :=  H(X|U_{\mc K}) + s \sum_{k=1}^K [I(Y_k;U_k) + H(X|U_k)].
\end{align}

\end{proposition}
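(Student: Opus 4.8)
The plan is to establish the parametric characterization via a Lagrangian/supporting-hyperplane argument on the information-rate curve $\Delta = \Delta_{\mathrm{sum}}(R_{\mathrm{sum}})$. First I would observe that, since $\mc R_{\mathrm{sum}}$ is defined as a convex hull and $\Delta_{\mathrm{sum}}(R)$ is nondecreasing and concave in $R$ (concavity following from a standard time-sharing/mixing argument on the pmfs $\dv P$, together with the fact that the inner objective is a min of two expressions that are respectively concave and affine in the induced joint distribution), every point on the upper boundary is exposed by a supporting line of some slope. Writing the supporting-line optimization as $\max_{\Delta,R}\{\Delta - \tfrac{s}{1+s}R\}$ over the region, with the slope parametrized by $s/(1+s)$ for $s\ge 0$, reduces the problem to an unconstrained maximization over $\dv P$ of $\Delta_{\mathrm{sum}}(R) - \tfrac{s}{1+s}R$. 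At the optimum the two terms inside the $\min$ in $\Delta_{\mathrm{sum}}$ are balanced (or one can restrict attention to the regime where $I(X;U_{\mc K})$ is the active constraint, the other case being boundary and handled separately), which lets me eliminate $R$.

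Next I would carry out the algebraic manipulation converting the resulting expression into $F_s(\dv P)$. The key identities are $I(X;U_{\mc K}) = H(X) - H(X|U_{\mc K})$ and, using the conditional independence Markov chain \eqref{eq:MKChain_pmf} together with the Markov chains $U_k \mkv Y_k \mkv X$, the decompositions $I(Y_k;U_k|X) = I(Y_k;U_k) - I(X;U_k) = I(Y_k;U_k) - H(X) + H(X|U_k)$. Substituting these into $\Delta_{\mathrm{sum}}(R) = I(X;U_{\mc K})$ and into the rate penalty $R - \sum_k I(Y_k;U_k|X)$, and collecting the terms weighted by $s$, the objective $\Delta - \tfrac{s}{1+s}R$ should rearrange (after multiplying through by $1+s$) precisely into the form $(1+sK)H(X) + sR - [H(X|U_{\mc K}) + s\sum_k(I(Y_k;U_k) + H(X|U_k))]$, i.e. $(1+s)\Delta_s = (1+sK)H(X) + sR_s - \min_{\dv P} F_s(\dv P)$, giving \eqref{eq:Dparam}. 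For \eqref{eq:R1param}, at the optimizing $\dv P^*$ the two branches of the $\min$ are equal, so $R_s$ equals $I(X;U_{\mc K}^*) + \sum_k I(Y_k;U_k^*|X)$; expanding $I(Y_k;U_k^*|X) = I(Y_k;U_k^*) - I(X;U_k^*)$ and rewriting $I(X;U_{\mc K}^*) + \sum_k I(Y_k;U_k^*|X)$ using the chain rule — noting $I(Y_{\mc K};U_{\mc K}^*) = I(X;U_{\mc K}^*) + \sum_k I(Y_k;U_k^*|X, U_{<k}^*)$ and that the conditional independence structure collapses the latter sum to $\sum_k I(Y_k;U_k^*|X)$ — yields $R_s = I(Y_{\mc K};U_{\mc K}^*) + \sum_k[I(Y_k;U_k^*) - I(X;U_k^*)]$, which is \eqref{eq:R1param}.

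The main obstacle I anticipate is justifying that sweeping $s \in [0,\infty)$ traces out the \emph{entire} curve — i.e. that the min-of-two-terms structure does not create a kink or a flat piece on which the supporting-hyperplane argument degenerates, and that the convex hull in the definition of $\mc R_{\mathrm{sum}}$ does not introduce boundary points unreachable by any single $s$. Handling this cleanly requires showing that on the relevant portion of the boundary the active constraint in the $\min$ is $I(X;U_{\mc K})$ (so $\Delta_{\mathrm{sum}}$ is genuinely concave and strictly increasing there), while the complementary regime corresponds to $s = 0$ or to the trivial saturated point $\Delta = H(X)$; the endpoints $R \to \infty$ and the minimal $R$ must be checked to correspond to $s \to 0$ and $s \to \infty$ respectively. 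The remaining work — concavity of $\Delta_{\mathrm{sum}}$, existence of the minimizer $\dv P^*$ by compactness of the (bounded-cardinality) set of test channels and continuity of $F_s$, and the routine mutual-information expansions above — is standard.
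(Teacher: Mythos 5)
Your overall strategy is essentially the paper's: its (abbreviated) argument is exactly the two-sided Lagrangian computation you describe --- one direction showing the parametrically defined pair satisfies $(1+s)\Delta_s = (1+s)I(X;U^*_{\mc K}) \le (1+s)\Delta_{\mathrm{sum}}(R_s)$, and a converse combining the constraints $\Delta_s \le I(X;U^*_{\mc K})$ and $\Delta_s \le R_s - \sum_{k} I(Y_k;U_k^*|X)$ with multipliers $1$ and $s$ to obtain the supporting-line bound $\Delta_{\mathrm{sum}}(R)\le \Delta_s + s(R-R_s)$. Your reduction of the objective to $(1+sK)H(X)-F_s(\dv P)$ via $I(X;U_{\mc K})=H(X)-H(X|U_{\mc K})$ and $I(Y_k;U_k|X)=I(Y_k;U_k)-H(X)+H(X|U_k)$ is precisely the computation behind \eqref{eq:Dparam}, and the concavity/exhaustiveness issues you flag are the parts the paper delegates to the cited reference.

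There is, however, a genuine error in your derivation of \eqref{eq:R1param}. You correctly identify the kink point as $R_s = I(X;U^*_{\mc K}) + \sum_{k} I(Y_k;U_k^*|X)$, but the chain rule together with the conditional-independence structure gives $I(Y_{\mc K};U^*_{\mc K}) = I(X;U^*_{\mc K}) + \sum_{k} I(Y_k;U_k^*|X)$, i.e.\ $R_s = I(Y_{\mc K};U^*_{\mc K})$ with nothing left over. Writing $R_s = I(Y_{\mc K};U^*_{\mc K}) + \sum_{k}[I(Y_k;U_k^*)-I(X;U_k^*)]$ therefore counts $\sum_{k} I(Y_k;U_k^*|X)$ twice; the identity you invoke to land on the displayed form of \eqref{eq:R1param} is false. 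The expression consistent with \eqref{eq:Dparam} (and with the substitution $\sum_{k}[I(Y_k;U_k^*)+H(X|U_k^*)] = R_s - I(X;U^*_{\mc K}) + KH(X)$ that the paper's own computation relies on) is $R_s = I(X;U^*_{\mc K}) + \sum_{k}[I(Y_k;U_k^*)-I(X;U_k^*)]$, which indeed equals $I(Y_{\mc K};U^*_{\mc K})$; the leading term $I(Y_{\mc K};U^*_{\mc K})$ in the printed \eqref{eq:R1param} appears to be a typo for $I(X;U^*_{\mc K})$. You should note this discrepancy explicitly rather than manufacture a chain-rule step to match the printed formula --- as written, that step does not hold.
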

%\begin{proposition}
%For each $s \geq 0$, define a information-rate tuple $(\Delta_{s}, R_{s})$ parametrically by 
%\begin{align}
%&(1+s)\Delta_{s} = (1+sK)H(X) +  s R_{s} - \min_{\dv p}F_{s}(\dv p),\label{eq:Dparam}\\
%&R_{s} = I(Y_{\mc K};U_{\mc K}^*) + \sum_{k=1}^K [I(Y_k;U_k^*) - I(X;U_k^*)]\label{eq:R1param}
%\end{align}
%where 
%\begin{align}
%F_{s}(\dv p) =& H(X|U_{\mc K}) + s \sum_{k=1}^K [I(Y_k;U_k) + H(X|U_k)].
%\end{align}
%and $\dv p^*$ are the conditional pmfs yielding the minimum in \eqref{eq:Dparam}. 
%
%Then the following statements hold: 
%\begin{enumerate}
%\item Each value of the parameter $ s$ leads to an information-rate pair $(\Delta_s, R_{s})$ on the information-rate curve, that is, 
%\begin{align}
%\Delta_{ s} = \Delta_{\mathrm{sum}}(R_{s}).\label{eq:DRcurve}
%\end{align}
%\item Every point on the information-rate curve has this form for some $s$, i.e., for any $R\in [0,R_{\mathrm{max}}]$, there is an $s$ for which \eqref{eq:Dparam}-\eqref{eq:R1param} hold.
%\end{enumerate}
%
%\end{proposition}

\begin{proof}
The proof of Proposition~\ref{prop:param} follows along the lines of \cite[Theorem 2]{UE-AZ17a} and is omitted for brevity.  Note that the rate expression in Theorem~\ref{th:MK_C_Main} is different to that in \cite{UE-AZ17a}.
\iffalse
Suppose that $\dv P^*$ yields the minimum in \eqref{eq:Dparam}. Then, 
\begin{align}
(1+s)\Delta_{s} = &(1+sK)H(X)+sR_{s}-F_{s}(\dv P^*)\\
 =&(1+sK) H(X) + sR_{ s}\\
&-(H(X|U_{\mc K}^*) + s(R_s  - I(X;U^*_{\mc K})+KH(X)) )\nonumber\\
=&(1+s) I(X;U_{\mc K}^*) \leq (1+s)\Delta_{\mathrm{sum}}(R_{s}), \label{eq:part1}
\end{align}
where \eqref{eq:part1} follows since for $\dv P^*$, we have from \eqref{eq:R1param} that $
\sum_{k=1}^K [I(Y_k;U_k^*) + H(X|U_k^*)] = R_s  - I(X;U^*_{\mc K})+KH(X)$. 

Conversely, if $\dv P^*$ is the solution to the maximization defining $\Delta_{\mathrm{sum}}(R)=\Delta_s$, then $\Delta_s\leq I(X;U_{\mc{K}}^*)$ and $\Delta_s\leq R_s -\sum_{k=1}^K I(Y_k;U^*_k|X)$ and we have, for any $s$, that
\begin{align}
\Delta&_{\mathrm{sum}}(R) =\Delta_s\nonumber\\
\leq & \Delta_s \!-\!(\Delta_s\!-\! I(X;U_{\mc{K}}^*))
-s(\Delta_s- R_s +\sum_{k=1}^K I(Y_k;U^*_k|X))\nonumber\\
%= & -s\Delta_s +s R_s + I(X;U_{\mc{K}}^*)-s\sum_{k=1}^K I(Y_k;U^*_k|X)\nonumber\\
%=& H(X) - s\Delta_s + sR_s - (H(X|U^*_{\mc K}) - s\sum_{k=1}^K I(Y_k;U^*_k|X)) \nonumber\\
=& H(X) - s\Delta_s  + sR - (H(X)  + sR_s -(1+s)\Delta_s)\label{eq:LagrangianEq}\\
=& \Delta_{s} + s(R -R_{s}) \label{eq:LagrangianEq_2},
\end{align}
where \eqref{eq:LagrangianEq} follows from \eqref{eq:Dparam} and since $\sum_{k=1}^KI(Y_k;U_k|X)  = -KH(X)+\sum_{k=1}^KI(Y_k;U_k)+H(X|U_k)$,
 due to the Markov chain $U_k - Y_k - X - (Y_{\mc K\setminus k }, U_{\mc K\setminus k })$.
%Inequatity \eqref{eq:LagrangianEq_2} is valid for any value of $R$ and $s$. 
Given $s$, and hence $(\Delta_s, R_{s})$, choosing $R = R_{s}$ yields
$\Delta_{\mathrm{sum}}(R_s) \leq \Delta_{ s}$. 
Together with \eqref{eq:part1}, this completes the proof.
\fi
\end{proof}

From  Proposition~\ref{prop:param}, the information-rate function can be computed by solving \eqref{eq:Dparam} and evaluating \eqref{eq:R1param} for all $s\geq 0$. Inspired by the standard Blahut-Arimoto (BA) method\cite{Blahut:IT:1972}, and following similar steps as for the BA-type algorithm proposed in \cite{UE-AZ17a} for the CEO problem with $K=2$ encoders, we show that problem \eqref{eq:Dparam} can be solved with an alternate optimization procedure, with respect to $\dv{P}$ and some appropriate auxiliary pmfs $Q_{X|U_k}$, $k\in \mc K$ and $Q_{X|U_{1},\dots, U_K}$, denoted for short as
\begin{align}
\dv Q := \{Q_{X|U_1},\ldots, Q_{X|U_K},Q_{X|U_1,\ldots, U_K}\}.
\end{align}
To this end, we define the  function $\bar{F}_s(\cdot)$ and write \eqref{eq:Dparam} as a minimization over the pmfs $\dv P$ and pmfs $\dv Q$, where 
\vspace{-1mm}
\begin{align}
\bar{F}_s(\dv P, \dv Q) := & \; s \sum_{k=1}^K I(Y_k;U_k)- s \sum_{k=1}^K \mathrm{E}_{X,U_k}[\log  q(X|U_k)]\nonumber\\
 &- \mathrm{E}_{X,U_{\mc{K}}}[\log q(X|U_1,\ldots, U_K)].\label{eq:FunctionPQ}
\end{align}
\begin{lemma}\label{lem:VarEq} We have
\vspace{-1mm}
\begin{align}
F^*:=\min_{\dv P}F_{s}(\dv P) = \min_{\dv P}\min_{\dv Q}\bar{F}_{s}(\dv P,\dv Q).\label{eq:VarEq}
\end{align}
\end{lemma}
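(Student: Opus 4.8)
The plan is to fix the test channels $\dv P$ and carry out the inner minimization over $\dv Q$ in closed form, showing it returns exactly $F_s(\dv P)$; minimizing over $\dv P$ afterwards then gives the claim. First I would observe that, once $\dv P=\{P_{U_k|Y_k}\}_{k\in\mc K}$ is fixed, the joint law $p(x)\prod_{k}p(y_k|x)\prod_{k}p(u_k|y_k)$ is determined, hence so are the marginals $P_{X,U_k}$ and $P_{X,U_{\mc K}}$; moreover the terms $s\sum_{k=1}^K I(Y_k;U_k)$ appearing in $\bar F_s(\dv P,\dv Q)$ do not depend on $\dv Q$ at all. Thus the dependence of $\bar F_s$ on $\dv Q$ is only through the two cross-entropy-type terms $-s\sum_{k=1}^K\mathrm{E}_{X,U_k}[\log q(X|U_k)]$ and $-\mathrm{E}_{X,U_{\mc K}}[\log q(X|U_1,\ldots,U_K)]$, and these decouple across the $K+1$ auxiliary pmfs.

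Next I would invoke the standard Gibbs-type identity: for any pmf $P_{X|U}$ induced by $\dv P$ and any conditional pmf $Q_{X|U}$,
\begin{align}
-\mathrm{E}_{X,U}[\log q(X|U)] = H(X|U) + \mathrm{E}_{U}\big[D_{\mathrm{KL}}(P_{X|U},Q_{X|U})\big] \;\geq\; H(X|U),\nonumber
\end{align}
with equality if and only if $Q_{X|U}=P_{X|U}$ on the support of $P_U$. Applying this with $U=U_k$ for each $k\in\mc K$ and with $U=U_{\mc K}$, and using non-negativity of the KL divergence, yields
\begin{align}
\bar F_s(\dv P,\dv Q) \;\geq\; s\sum_{k=1}^K I(Y_k;U_k) + s\sum_{k=1}^K H(X|U_k) + H(X|U_{\mc K}) \;=\; F_s(\dv P)\nonumber
\end{align}
for every $\dv Q$, and the bound is attained at $Q_{X|U_k}^{*}=P_{X|U_k}$, $k\in\mc K$, and $Q_{X|U_{\mc K}}^{*}=P_{X|U_{\mc K}}$. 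Hence $\min_{\dv Q}\bar F_s(\dv P,\dv Q)=F_s(\dv P)$ for each fixed $\dv P$.

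Finally, minimizing both sides over $\dv P$ and using $\min_{\dv P}\min_{\dv Q}\bar F_s(\dv P,\dv Q)=\min_{\dv P}\big(\min_{\dv Q}\bar F_s(\dv P,\dv Q)\big)$ gives $\min_{\dv P}\min_{\dv Q}\bar F_s(\dv P,\dv Q)=\min_{\dv P}F_s(\dv P)=F^{*}$, as claimed. There is no genuine obstacle here; the only points that merit a line of care are (i) verifying that the $I(Y_k;U_k)$ terms are truly $\dv Q$-free so that the inner problem separates into $K+1$ independent cross-entropy minimizations, and (ii) confirming the infimum over $\dv Q$ is achieved, which holds because the minimizer $\dv Q^{*}$ is exhibited explicitly.
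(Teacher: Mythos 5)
Your proof is correct and follows the same route the paper intends: it establishes exactly the content of Lemma~\ref{lemma:QUpdate} (for fixed $\dv P$, $\bar F_s(\dv P,\dv Q)\geq F_s(\dv P)$ with equality at $Q^*_{X|U_k}=P_{X|U_k}$ and $Q^*_{X|U_{\mc K}}=P_{X|U_{\mc K}}$, via the cross-entropy/KL decomposition) and then minimizes over $\dv P$. The paper defers the details to the analogous lemmas in \cite{UE-AZ17a}, but your Gibbs-inequality argument is precisely that omitted proof.
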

\vspace{-2mm}
Algorithm \ref{algo:BA_DMC} describes the steps to  successively minimize $\bar{F}_{s}(\dv P,\dv Q)$ by optimizing a convex problem over $\dv P$ and over $\dv Q$ at each iteration.
 The proof of Lemma \ref{lem:VarEq} and the steps of the proposed algorithm are justified with the following lemmas, whose proofs are along the lines of Lemma 1, Lemma 2, Lemma 3 in \cite{UE-AZ17a}, and are omitted due to space limitations.
\begin{lemma}\label{lem:convex}
 $\bar{F}_s(\dv P, \dv Q)$ is convex in $\dv P$ and convex in $\dv Q$.
 \end{lemma}

\iffalse
\begin{lemma}\label{lem:convex}
 $\bar{F}_s(\dv P, \dv Q)$ is convex in $\dv P$ and convex in $\dv Q$.
 \end{lemma}
\begin{proof}
Follows from the log-sum inequality~\cite{Cover:book}. %and the convexity of $x\log x$.
%Let $\dv P_1, \dv P_2$ and $\dv Q_1, \dv Q_2$ and $\dv P = \lambda \dv P_1 + \bar{\lambda}\dv P_2$ and $\dv Q = \lambda \dv Q_1 + \bar{\lambda}\dv Q_2$  for $\lambda\in [0,1]$ and $\bar{\lambda} := 1-\lambda$. 
%The convexity of $I(U_k;Y_k)$ is well known~\cite{Cover:book}. For the other terms in $F_s(\cdot)$, we have from the log-sum inequality
%\begin{align}
%p \log \frac{q}{p}&\leq \lambda p_1\log \frac{\lambda q_1}{\lambda p_2} +\bar{\lambda}p_2\log\frac{\bar{\lambda q_2}}{\bar{\lambda p_2}}\\
%&=\lambda p_1\log q_1 +\bar{\lambda}p_2\log q_2 -\lambda p_1 \log p_1 -\bar{\lambda}\log p_2.\nonumber
%\end{align}
%It follows that $p \log q\leq \lambda p_1\log q_1 +\bar{\lambda}p_2\log q_2$ provided 
%\begin{align}
%p\log p -\lambda p_1 \log p_1 -\bar{\lambda}\log p_2\leq 0,
%\end{align}
%which is true due to the the convexity of $x\log x$. Applying this relation to the elements of $Q_{X|U_k}$ and $Q_{X|U_{\mc K}}$, and summing over all $x\in \mc X$ and $u_k\in \mc U_k$ completes the proof.
\end{proof}
\fi

\begin{lemma}\label{lemma:QUpdate}
For fixed pmfs $\dv P$, $\bar{F}_s(\dv P, \dv Q) \geq F_s(\dv P)$ for all pmfs $\dv Q$, and
 there exists a unique $\dv Q$ that achieves the minimum $\min_{\dv Q}\bar{F}_s(\dv P, \dv Q) = F_s(\dv P)$, given by
\begin{align}
Q^*_{X|U_k} &= P_{X|U_k},\quad k\in \mc K, \label{eq:Qstark}\\
Q^*_{X|U_1,\ldots,U_k} &= P_{X|U_1,\ldots, U_K}, \label{eq:Qstarall}
\end{align}
where $P_{X|U_k}$ and $P_{X|U_1,\ldots, U_K}$ are computed from $\dv P$.
\end{lemma}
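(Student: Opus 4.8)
The plan is to show that $\bar{F}_s(\dv P,\dv Q) - F_s(\dv P)$ is a nonnegative combination of (averaged) Kullback-Leibler divergences that vanishes exactly at the claimed $\dv Q^{*}$. First I would fix $\dv P$: together with the source pmf $P_{X,Y_1,\ldots,Y_K}$ (and the Markov structure) this determines the joint pmf $P_{X,U_1,\ldots,U_K}$, hence all the conditionals $P_{X|U_k}$, $k\in\mc K$, and $P_{X|U_1,\ldots,U_K}$ appearing in \eqref{eq:Qstark}--\eqref{eq:Qstarall}. The elementary fact I would invoke is that for any pmf $q(x|u)$ and the true conditional $p(x|u)$,
\begin{align}
-\mathrm{E}_{X,U}[\log q(X|U)] = H(X|U) + \mathrm{E}_{U}\big[D_{\mathrm{KL}}(P_{X|U},Q_{X|U})\big] \geq H(X|U),\nonumber
\end{align}
with equality if and only if $q(x|u)=p(x|u)$ for $P_U$-almost every $u$ (i.e.\ on the support of $P_U$).

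Applying this identity to each of the $K$ terms $-s\,\mathrm{E}_{X,U_k}[\log q(X|U_k)]$ and to the term $-\mathrm{E}_{X,U_{\mc K}}[\log q(X|U_1,\ldots,U_K)]$ in the definition \eqref{eq:FunctionPQ} of $\bar F_s$, and noting that the mutual-information terms $s\sum_{k}I(Y_k;U_k)$ do not involve $\dv Q$, I would collect the entropy contributions into the definition of $F_s$ to get
\begin{align}
\bar{F}_s(\dv P,\dv Q) = F_s(\dv P) + s\sum_{k=1}^K \mathrm{E}_{U_k}\big[D_{\mathrm{KL}}(P_{X|U_k},Q_{X|U_k})\big] + \mathrm{E}_{U_{\mc K}}\big[D_{\mathrm{KL}}(P_{X|U_{\mc K}},Q_{X|U_{\mc K}})\big].\nonumber
\end{align}
Since $s\geq 0$ and every Kullback-Leibler divergence is nonnegative, this yields $\bar F_s(\dv P,\dv Q)\geq F_s(\dv P)$ for all $\dv Q$, which is the first claim.

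For the equality case, the right-hand side equals $F_s(\dv P)$ precisely when every divergence term vanishes, i.e.\ $Q_{X|U_k}=P_{X|U_k}$ for $P_{U_k}$-almost every $u_k$ and $Q_{X|U_1,\ldots,U_K}=P_{X|U_1,\ldots,U_K}$ for $P_{U_{\mc K}}$-almost every $u_{\mc K}$; this is exactly \eqref{eq:Qstark}--\eqref{eq:Qstarall} and it pins down $\dv Q$ uniquely on the supports relevant to the objective. The routine part is the bookkeeping that matches the entropy terms against the definition of $F_s$, and I expect the only point needing a word of care to be the almost-everywhere qualification in the uniqueness statement: the values of $q(x|u)$ at symbols $u$ outside the relevant support do not affect $\bar F_s$, so ``unique'' must be read in that sense (and in the Gaussian extension one argues identically with differential entropies and the corresponding continuous divergence). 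Overall I do not anticipate a genuine obstacle here; this is the standard $\dv Q$-update step of Blahut-Arimoto-type arguments, mirroring Lemma~2 in \cite{UE-AZ17a}.
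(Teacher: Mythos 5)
Your decomposition of $\bar F_s(\dv P,\dv Q)-F_s(\dv P)$ into $s\sum_k \mathrm{E}_{U_k}[D_{\mathrm{KL}}(P_{X|U_k},Q_{X|U_k})]+\mathrm{E}_{U_{\mc K}}[D_{\mathrm{KL}}(P_{X|U_{\mc K}},Q_{X|U_{\mc K}})]$ is correct and is exactly the argument the paper intends (it omits the proof, deferring to the analogous lemma in \cite{UE-AZ17a}, which proceeds by the same cross-entropy-equals-entropy-plus-divergence identity). Your caveat that uniqueness holds only on the support of the relevant marginals (and, one might add, only for $s>0$ as far as the $Q_{X|U_k}$ are concerned) is a fair refinement of the statement rather than a deviation from it.
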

%\begin{proof}
%Similarly to \cite{Blahut:IT:1972} and \cite{UE-AZ17a}, 
%%\begin{align}
%$\bar{F}_s(\dv P, \dv Q)-F_s(\dv P)
%= s\sum_{k=1}^KD_{\mathrm{KL}}(P_{X|U_k}||Q_{X|U_k}) + D_{\mathrm{KL}}(P_{X|U_{\mc K}}||Q_{X|U_{\mc K}})
%\geq 0$,
%%\end{align}
%where the equality is met iff $\dv Q^*$ is given as  \eqref{eq:Qstark} and \eqref{eq:Qstarall}.
%\end{proof}
\begin{lemma}
For fixed $\dv Q$, there exists a  $\dv P$ that achieves the minimum $\min_{\dv P}\bar{F}_s(\dv P, \dv Q)$, where $P_{U_k|Y_k}$ is given by
\begin{align}
p^*(u_k|y_k) = p(u_k)\frac{\exp\left(-\psi_s(u_k,y_k)\right)}{\sum_{y_k\in \mc Y_k} \exp(-\psi_s(u_k,y_k))},\label{eq:P_update}
\end{align}
for $u_k\in \mc U_k$ and $y_k\in \mc Y_k$, $k\in \mc K$,  and where we define 
\begin{align}\label{eq:P_update_psi}
\psi_s(u_k,y_k):= & D_{\mathrm{KL}}(P_{X|y_k}||Q_{X|u_k})\\
&+\frac{1}{s}
\mathrm{E}_{U_{\mc K\setminus k}|y_k}[D_{\mathrm{KL}}(P_{X|U_{\mc K\setminus k},y_k}||Q_{X|U_{\mc K\setminus k},u_k}))]\nonumber.
%D_{\mathrm{KL}}(P_{ X, U_{\mc K\setminus k}|y_k}||Q_{ X,U_{\mc K\setminus k}|u_k})-D_{\mathrm{KL}}(P_{ U_{\mc K\setminus %k}|y_k}||Q_{ U_{\mc K\setminus k}|u_k})
\end{align}
\end{lemma}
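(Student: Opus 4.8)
\emph{Proof sketch.} The plan is a Lagrangian (KKT) analysis of the convex subproblem obtained by freezing $\dv Q$, the one new ingredient relative to the single‑encoder case being the use of the conditional independence \eqref{eq:MKChain_pmf} to handle the last term of \eqref{eq:FunctionPQ}, which couples the $K$ encoders. This is of course in the spirit of Tishby's self‑consistent equations and of Lemma 3 in \cite{UE-AZ17a}.

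First I would argue existence: with the usual conventions $\bar F_s(\cdot,\dv Q)$ is continuous on the compact product of probability simplices $\{P_{U_k|Y_k}\}_{k\in\mc K}$, so the minimum is attained at some $\dv P^*$. Fixing an index $k$ and the other components $\dv P^*_{\mc K\setminus k}$, the pmf $P^*_{U_k|Y_k}$ must solve $\min_{P_{U_k|Y_k}}\bar F_s$, which by Lemma~\ref{lem:convex} is a convex program over $\{p(u_k|y_k)\ge 0,\ \sum_{u_k}p(u_k|y_k)=1\ \forall\,y_k\}$; hence its optimum is characterized (necessarily and sufficiently) by the stationarity conditions of the Lagrangian $\bar F_s+\sum_{y_k}\lambda_{y_k}\big(\sum_{u_k}p(u_k|y_k)-1\big)$, the nonnegativity constraints being inactive at a full‑support optimum (otherwise handled by a routine limiting argument).

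The heart of the computation is $\partial\bar F_s/\partial p(u_k|y_k)$, to which three pieces of \eqref{eq:FunctionPQ} contribute. The term $sI(Y_k;U_k)$ gives the familiar $s\,p(y_k)\log\!\big(p(u_k|y_k)/p(u_k)\big)$ after the $+1$ constants cancel. The term $-s\,\mathrm{E}_{X,U_k}[\log q(X|U_k)]$, using $p(x,u_k)=\sum_{y_k'}p(x,y_k')p(u_k|y_k')$ together with $-\sum_x p(x|y_k)\log q(x|u_k)=D_{\mathrm{KL}}(P_{X|y_k}\|Q_{X|u_k})+H(X|y_k)$, gives $s\,p(y_k)\big(D_{\mathrm{KL}}(P_{X|y_k}\|Q_{X|u_k})+H(X|y_k)\big)$. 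For the coupling term $-\mathrm{E}_{X,U_{\mc K}}[\log q(X|U_{\mc K})]$ I would write $p(x,u_{\mc K})=\sum_{y_{\mc K}}p(x)\prod_j p(y_j|x)\prod_j p(u_j|y_j)$ and invoke \eqref{eq:MKChain_pmf} to factor the sum over $y_{\mc K\setminus k}$: $\sum_{y_{\mc K\setminus k}}\prod_{j\ne k}p(y_j|x)p(u_j|y_j)=p(u_{\mc K\setminus k}|x)=p(u_{\mc K\setminus k}|x,y_k)$, the last equality being the induced Markov chain $U_{\mc K\setminus k}-X-Y_k$; after splitting off an entropy one is left with $p(y_k)\,\mathrm{E}_{U_{\mc K\setminus k}|y_k}\big[D_{\mathrm{KL}}(P_{X|U_{\mc K\setminus k},y_k}\|Q_{X|U_{\mc K\setminus k},u_k})+H(X|U_{\mc K\setminus k},y_k)\big]$. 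Getting this reduction clean — recognizing that the marginalization over $y_{\mc K\setminus k}$ collapses, via \eqref{eq:MKChain_pmf}, into a conditional expectation over $U_{\mc K\setminus k}$ given $y_k$ — is the step I expect to be the main obstacle.

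Finally, setting $\partial\bar F_s/\partial p(u_k|y_k)+\lambda_{y_k}=0$, dividing by $s\,p(y_k)>0$, and lumping all $u_k$‑independent quantities (the two conditional entropies and $\lambda_{y_k}/(s\,p(y_k))$) into a single $y_k$‑dependent constant yields $\log\!\big(p(u_k|y_k)/p(u_k)\big)=-\psi_s(u_k,y_k)+\mathrm{const}(y_k)$ with $\psi_s$ exactly as in \eqref{eq:P_update_psi}; the factor $1/s$ there is precisely what survives on the coupling KL term, since it alone carries no prefactor $s$ in \eqref{eq:FunctionPQ}. Exponentiating gives $p^*(u_k|y_k)\propto p(u_k)\exp(-\psi_s(u_k,y_k))$, and $\mathrm{const}(y_k)$ — equivalently the normalizing denominator in \eqref{eq:P_update} — is pinned down by $\sum_{u_k}p^*(u_k|y_k)=1$, which is \eqref{eq:P_update}. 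Since these identities hold at $\dv P^*$ for every $k$ simultaneously, $\dv P^*$ is the asserted minimizer; note that they form a \emph{self‑consistent} system, in that $p(u_k)$ and the laws of $(X,U_{\mc K\setminus k})$ appearing on the right are those induced by $\dv P^*$ itself.
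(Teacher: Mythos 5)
Your proposal is correct and follows essentially the same route as the paper's own (omitted, but sketched in its source) argument: a Lagrangian/KKT analysis of the convex subproblem in $\dv P$ for fixed $\dv Q$, with the conditional-independence structure \eqref{eq:MKChain_pmf} used to collapse the coupling term $-\mathrm{E}_{X,U_{\mc K}}[\log q(X|U_{\mc K})]$ into a conditional expectation over $U_{\mc K\setminus k}$ given $y_k$, and all $u_k$-independent entropy terms absorbed into the normalizing constant. You also correctly identify that the normalization in \eqref{eq:P_update} must run over $u_k\in\mc U_k$ (the sum over $y_k$ in the displayed denominator is a typo), which is exactly what the constraint $\sum_{u_k}p(u_k|y_k)=1$ enforces.
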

%=============================================================================================================================================

\begin{algorithm}
\caption{BA-type algorithm for the Discrete D-IB}\label{algo:BA_DMC}
\begin{algorithmic}[1]
\smallskip
\Inputs{pmf $P_{X,Y_1,\ldots,Y_k}$, parameter $s\geq 0$.}
\Outputs{optimal $P^*_{U_k|Y_k}$, pair $(\Delta_s,R_s)$.}
\Initialize{Set $t=0$ and set $\dv P^{(0)}$ with $p(u_k|y_k)= \frac{1}{|\mc U_k|}$ \\ for $u_k\in \mc U_k$, $y_k\in \mc Y_k$, $k=1,\ldots, K$.}
%\smallskip
\Repeat 
\State Compute $\dv Q^{(t+1)}$ as \eqref{eq:Qstark} and \eqref{eq:Qstarall} from $\dv P^{(t)}$.
\State Compute $\dv P^{(t+1)}$ as \eqref{eq:P_update} from $\dv Q^{(t+1)}$ and $\dv P^{(t)}$.
\State $t \leftarrow t+1$.
\Until{convergence.}
\end{algorithmic}
\end{algorithm}

\iffalse
\begin{proof}
It follows from the log-sum inequality~\cite{Cover:book} that 
 $\bar{F}_s(\dv P, \dv Q)$ is a convex function of $\dv P$ and  of $\dv Q$. To minimize  $\bar{F}_s(\dv P, \dv Q)$  w.r.t. $\dv P$, for given $\dv Q$, we add the Lagrange multipliers $\lambda_{y_k}\geq 0$  for each constraint $\sum_{u_k\in \mc U_k}p(u_k|y_k) = 1$ with $y_k\in \mc Y_k$.  For each $s$, $\lambda_{y_k}\geq 0$ and $p(u_k|y_k)$ can be explicitly found by solving the KKT conditions.  
%We have for each $s$, 
%\begin{align}
%\frac{\partial}{\partial p(u_k|y_k)}\left[\bar{F}_s(\dv P,\dv Q) \!+\! \sum_{y_k\in \mc Y_k}\lambda_{y_k}(\sum_{u_k\in \mc U_k}p(u_k|y_k) - 1)\right] = 0,\nonumber
%\end{align}
%for which $\lambda_{y_k}\geq 0$ and $p(u_k|y_k)$ can be explicitly found. 
\end{proof}
\fi
%\subsection{Convergence of Algorithm~\ref{algo:BA_DMC}}

 Algorithm \ref{algo:BA_DMC} essentially falls in the Successive Upper-Bound Minimization (SUM) framework \cite{Razaviyayn:SIAM:UnifiedConvergence}  in which $\bar{F}_s(\dv P, \dv Q)$ acts as a globally tight upper bound on $F_s(\dv P)$. Algorithm \ref{algo:BA_DMC} provides a sequence $\dv P^{(t)}$ for each iteration $t$, which  converges to a stationary point of the optimization problem~\eqref{eq:VarEq}. 
\vspace{-1mm}
\begin{proposition}
Every limit point of the sequence $\dv P^{(t)}$ generated by Algorithm~\ref{algo:BA_DMC} converges to a stationary point of~\eqref{eq:VarEq}.
\end{proposition}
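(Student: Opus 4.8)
The plan is to invoke the Successive Upper-bound Minimization (SUM) convergence theory of Razaviyayn, Hong and Luo~\cite{Razaviyayn:SIAM:UnifiedConvergence} and verify that Algorithm~\ref{algo:BA_DMC} fits its hypotheses. First I would set up the correspondence: the objective to be minimized is $F_s(\dv P)$ over the (compact, convex) product simplex of conditional pmfs $\dv P = \{P_{U_k|Y_k}\}_{k\in\mc K}$, and the surrogate is $\bar F_s(\dv P,\dv Q)$ with the inner variable $\dv Q$ playing the role of the auxiliary minimizer. By Lemma~\ref{lemma:QUpdate}, for every fixed $\dv P$ we have $\bar F_s(\dv P,\dv Q)\ge F_s(\dv P)$ for all $\dv Q$, with equality at the explicit minimizer $\dv Q^*(\dv P)$ given by~\eqref{eq:Qstark}--\eqref{eq:Qstarall}. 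Hence the function $g(\dv P;\dv P^{(t)}) := \bar F_s(\dv P,\dv Q^*(\dv P^{(t)}))$ is a \emph{globally tight upper bound} on $F_s(\dv P)$: it dominates $F_s$ everywhere and touches it at $\dv P^{(t)}$. This is precisely the SUM surrogate condition. One then notes that each iteration of Algorithm~\ref{algo:BA_DMC} — compute $\dv Q^{(t+1)}=\dv Q^*(\dv P^{(t)})$, then $\dv P^{(t+1)}=\arg\min_{\dv P} g(\dv P;\dv P^{(t)})$ via~\eqref{eq:P_update} — is exactly the SUM update $\dv P^{(t+1)}\in\arg\min_{\dv P} g(\dv P;\dv P^{(t)})$.

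Next I would check the remaining regularity hypotheses of the SUM convergence theorem. The feasible set is the Cartesian product of probability simplices, which is nonempty, compact and convex. The surrogate $g(\cdot;\cdot)$ is continuous in both arguments (the entropy and mutual-information terms are continuous, and $\dv Q^*(\cdot)$ depends continuously on $\dv P$ wherever the relevant marginals are strictly positive; on the boundary one uses the usual $0\log 0=0$ convention and continuity extends), and by Lemma~\ref{lem:convex} it is convex in $\dv P$, so the minimizer in step~4 exists and the KKT/first-order characterization~\eqref{eq:P_update} is valid. The key directional-derivative consistency condition of the SUM framework — that the one-sided directional derivative of the surrogate at the touching point equals that of $F_s$ — follows from global tightness of the bound: since $g(\dv P;\dv P^{(t)})-F_s(\dv P)\ge 0$ with equality at $\dv P=\dv P^{(t)}$, this difference has a local minimum there, so all its one-sided directional derivatives are nonnegative, and combined with the same inequality applied along $-d$ one gets equality of directional derivatives in every feasible direction $d$. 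With these verifications, the SUM theorem yields that $F_s(\dv P^{(t)})$ is nonincreasing and that every limit point of $\{\dv P^{(t)}\}$ is a stationary point (in the sense of satisfying the first-order optimality conditions) of $\min_{\dv P} F_s(\dv P)$, which by Lemma~\ref{lem:VarEq} is the same problem as~\eqref{eq:VarEq}.

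I expect the main obstacle to be the boundary/continuity bookkeeping: $\dv Q^*(\dv P)$ involves conditional distributions $P_{X|U_k}$ and $P_{X|U_{\mc K}}$ that are only well-defined when the induced marginals $P_{U_k}$, $P_{U_{\mc K}}$ put positive mass on the conditioning symbols, and the logarithms in $\bar F_s$ can blow up near the boundary of the simplex. The clean way around this is to observe that the update~\eqref{eq:P_update} has a strictly positive exponential form, so after the first iteration all $p^{(t)}(u_k|y_k)$ are bounded away from zero (for a fixed finite $s>0$), hence all iterates live in a compact subset of the relative interior where $g$ and $\dv Q^*(\cdot)$ are smooth and the SUM hypotheses hold verbatim; one then only needs a short remark that limit points inherit this and that the $s\to 0$ and $s\to\infty$ degenerate cases, or symbols that are never used, are handled by the standard conventions. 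I would write the proof as: (i) recall the SUM hypotheses; (ii) identify the surrogate and cite Lemmas~\ref{lem:convex} and~\ref{lemma:QUpdate} for dominance, tightness and convexity; (iii) argue directional-derivative consistency from global tightness; (iv) note the interior/compactness point; (v) conclude by~\cite[Theorem~2a]{Razaviyayn:SIAM:UnifiedConvergence} together with Lemma~\ref{lem:VarEq}.
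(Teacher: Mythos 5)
Your proposal is correct and follows essentially the same route as the paper: both identify $\bar F_s(\dv P,\dv Q^*(\dv P^{(t)}))$ as a globally tight upper bound on $F_s(\dv P)$ via Lemma~\ref{lemma:QUpdate}, verify the SUM assumptions (the paper does this by citing \cite[Proposition 1]{Razaviyayn:SIAM:UnifiedConvergence}, which is exactly your directional-derivative-consistency-from-tightness argument), and conclude from the SUM convergence theorem together with Lemma~\ref{lem:VarEq}. Your version is simply more explicit about the regularity and boundary bookkeeping that the paper leaves implicit.
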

\vspace{-3mm}
\begin{proof}
Let $\dv Q^*(\dv P):= \arg\min_{\dv Q}\bar{F}_s(\dv P, \dv Q)$. From Lemma~\ref{lemma:QUpdate},  $\bar{F}_s(\dv P, \dv Q^*(\dv P'))\geq \bar{F}_s(\dv P, \dv Q^*(\dv P)) = F_s(\dv P) $ for $\dv P'\neq \dv P$.  It follows that $F_s(\dv P)$ and $\bar{F}_s(\dv P, \dv Q^*(\dv P'))$ satisfy \cite[Proposition 1]{Razaviyayn:SIAM:UnifiedConvergence}  and thus $\bar{F}_s(\dv P, \dv Q^*(\dv P'))$  satisfies A1-A4 in \cite{Razaviyayn:SIAM:UnifiedConvergence}. Convergence to a stationary point of~\eqref{eq:VarEq} follows from \cite[Theorem 1]{Razaviyayn:SIAM:UnifiedConvergence}.
\end{proof}
\vspace{-3mm}

\begin{remark}
The resulting set of self consistent equations  \eqref{eq:Qstark}, \eqref{eq:Qstarall} and \eqref{eq:P_update_psi} satisfied by any stationary point of the D-IB problem, remind that of the original IB problem~\cite{GlobersonTishby:Techical:Gaussian}.
 Note the additional divergence term in  \eqref{eq:P_update_psi} for encoder $k$ averaged over the descriptions at the other $\mc K\setminus k$ encoders.
\end{remark}
\vspace{-4mm}

\section{Computation of the Information Rate Region for the Vector Gaussian D-IB}

Computing the maximum information under sum-rate constraint from Theorem~\ref{th:GaussSumCap} is a convex optimization problem on $\dv B_k$, which can be efficiently solved with generic tools. 
Alternatively, next we extend Algorithm~\ref{algo:BA_DMC} for Gaussian  sources.

For finite alphabet sources the updates of  $\dv Q^{(t+1)}$ and $\dv P^{(t+1)}$ in Algorithm~\ref{algo:BA_DMC} are simple, but become unfeasible for continuous alphabet sources. We leverage on the optimality of Gaussian descriptions, shown in  Theorem~\ref{th:GaussSumCap}, to restrict the optimization of $\dv P$ to Gaussian distributions, which are easily represented by a finite set of parameters, namely its mean and covariance. We show that if $\dv P^{(t)}$ are Gaussian pmfs, then $\dv{P}^{(t+1)}$ are also  Gaussian pmfs, which can be computed with an efficient update algorithm of its representing parameters. In particular,  if at time $t$, the $k$-th pmf $P_{\dv U_k|\dv Y_k}^{(t)}$ is given by
\begin{align}
\dv U_k^{t} = \dv A_{k}^{t}\dv Y_k +\dv Z_{k}^{t},\label{eq:testChan}
\end{align}
where $\dv Z_{k}^{t}\sim\mc{CN}(\dv 0,\dv \Sigma_{\dv z_{k}^{t}})$; we show that for $\dv P^{(t+1)}$  updated as in \eqref{eq:P_update},  $P_{\dv U_{k}|\dv Y_k}^{(t+1)}$ corresponds to $\dv U_{k}^{t+1} = \dv A_{k}^{t+1}\dv Y_{k}+\dv Z_{k}^{t+1}$,
where $\dv Z_{k}^{t+1}\!\sim\mc{CN}(\dv 0, \dv\Sigma_{\dv z_{k}^{t+1}})$  and  $\dv A_{k}^{t+1}, \dv\Sigma_{\dv z_{k}^{t+1}}$ are updated as
\begin{align}
\dv \Sigma_{\dv z_k^{t+1}} =&\left(\left(1+\frac{1}{s}\right)\dv \Sigma_{\dv u_k^t|\dv x}^{-1}  - \frac{1}{s} \dv \Sigma_{\dv u_k^t|\dv u_{\mc K\setminus k}^t}^{-1}\right)^{-1},\label{eq:SigmaUpdate}\\
\dv A_{k}^{t+1} =&\dv \Sigma_{\dv z_k^{t+1}}^{-1} \left(\left(1+\frac{1}{s}\right)\dv \Sigma_{\dv u_k^t|\dv x}^{-1}\dv  A_{k}^t(\dv I - \dv \Sigma_{\dv y_k|\dv x}\dv \Sigma_{\dv y_k}^{-1})\right.\nonumber\\
&\left.-\frac{1}{s}\dv \Sigma_{\dv u_k^t|\dv u_{\mc K\setminus k}^t}^{-1}\dv  A_{k}^t(\dv I - \dv \Sigma_{\dv y_k|\dv u_{\mc K\setminus k}^t}\dv \Sigma_{\dv y_k}^{-1})\right).\label{eq:AUpdate}
\end{align}
The detailed update procedure is given in Algorithm~\ref{algo:BA_Gauss}.
\newpage
\begin{remark}
Algorithm~\ref{algo:BA_Gauss} generalizes the iterative algorithm  for single encoder Gaussian D-IB in~\cite{journals/jmlr/ChechikGTW05}  to the Gaussian D-IB  with $K$ encoders and sum-rate constraint. Similarly to the solution in~\cite{journals/jmlr/ChechikGTW05}, the optimal description at each encoder is given by a noisy linear projection of the observation, whose dimensionality is determined by the parameter $s$ and the second order moments between the observed data and the source of interest, as well as a term depending on the  observed data with respect to the descriptions at the other encoders.
\end{remark}

\subsection{Derivation of Algorithm~\ref{algo:BA_Gauss}}

In this section, we derive the update rules in Algorithm~\ref{algo:BA_Gauss} and show that the Gaussian distribution is invariant to the update rules in Algorithm~\ref{algo:BA_DMC}, in line with Theorem~\ref{th:GaussSumCap}.

First, we recall that if $(\dv X_1,\dv X_2)$ are jointly Gaussian, then
\begin{align}
P_{\dv X_2|\dv X_1 = \dv x_1} = \mc{CN}(\boldsymbol\mu_{\dv x_2|\dv x_1},\dv\Sigma_{\dv x_2|\dv x_1}),
\end{align}
where
$\boldsymbol\mu_{\dv x_2|\dv x_1}:= \dv K_{\dv x_2|\dv x_1}\dv x_1$, with $\dv K_{\dv x_2|\dv x_1}:=\dv\Sigma_{\dv x_2,\dv x_1}\dv \Sigma_{\dv x_1}^{-1}$ .

Then, for $\dv Q^{(t+1)}$ computed as in \eqref{eq:Qstark} and \eqref{eq:Qstarall} from $\dv P^{(t)}$, which is a set of Gaussian distributions,  we have
\begin{align}
 Q^{(t+1)}_{\dv X|\dv{u}_k} = \mc {CN}(\boldsymbol \mu_{\dv x|\dv u_{k}^t} , \dv \Sigma_{\dv x|\dv u_{k}^t}),
Q^{(t+1)}_{\dv X|\dv{u}_{\mc K}} = \mc {CN}(\boldsymbol \mu_{\dv x|\dv u_{\mc K}^t}, \dv \Sigma_{\dv x|\dv u_{\mc K}^t})\nonumber.
\end{align}

Next, we look at the update  $\dv P^{(t+1)}$ as in \eqref{eq:P_update} from given $\dv Q^{(t+1)}$. First, we have that $p(\dv u_{k}^t)$  is the marginal of $\dv U_{k}^t$, given by $\dv U_{k}^t\sim \mc{CN}(\dv 0,\dv \Sigma_{\dv u_{k}^t} )$  where $\dv \Sigma_{\dv u_{k}^t} = \dv A_{k}^t\dv \Sigma_{\dv y_k} \dv A_{k}^{t,H} + \dv\Sigma_{\dv z_{k}^t}$. 

Then, to compute $\psi_s(\dv u_k^t,\dv y_k)$, first, we note that 
\begin{align}
E&_{ U_{\mc K\setminus k}|y_k }[D_{\mathrm{KL}}(P_{ X| U_{\mc K\setminus k},y_k}||Q_{X| U_{\mc K\setminus k},u_k})]\label{eq:DistEq}\\
=&D_{\mathrm{KL}}(P_{ X, U_{\mc K\setminus k}|y_k}||Q_{ X,U_{\mc K\setminus k}|u_k})\!-\!D_{\mathrm{KL}}(P_{ U_{\mc K\setminus k}|y_k}||Q_{ U_{\mc K\setminus k}|u_k})\nonumber,
\end{align}
and that for two generic multivariate Gaussian distributions $P_1\sim\mc{CN}(\boldsymbol \mu_1,\dv \Sigma_1)$ and  $P_2\sim\mc{CN}(\boldsymbol \mu_2,\dv \Sigma_2)$ in $\mathds{C}^N$,
\begin{align}
D_{\mathrm{KL}}(P_1,P_2) =& (\boldsymbol\mu_1-\boldsymbol\mu_2)^H\dv\Sigma_{2}^{-1}(\boldsymbol\mu_1-\boldsymbol\mu_2)\nonumber\\
&+\log |\dv\Sigma_2\dv\Sigma_1^{-1}| - N +\mathrm{tr}\{\dv \Sigma_2^{-1}\dv\Sigma_1\}.\label{eq:DistGaussi}
\end{align}

Applying \eqref{eq:DistEq} and \eqref{eq:DistGaussi} in \eqref{eq:P_update_psi} and noting that all involved distributions are Gaussian, it follows that $\psi_s(\dv u_k^t,\dv y_k)$ is a quadratic form. Then, since $p(\dv u_k^t)$ is Gaussian, the product $\log (p(\dv u_k^t)\exp(-\psi_s(\dv u_k^t,\dv y_k)))$ is also a quadratic form, and identifying constant, first and second order terms, we can write
\begin{align}
\log p^{(t+1)}(\dv u_k|\dv y_k) =& Z(\dv y_k)+ (\dv u_k-\boldsymbol\mu_{\dv u_{k}^{t+1}| \dv y_k})^{H}\dv \Sigma_{\dv z_k^{t+1}}^{-1}\nonumber\\
&\cdot (\dv u_k-\boldsymbol\mu_{\dv u_{k}^{t+1}|\dv y_k}),
\end{align}
where $ Z(\dv y_k)$ is a normalization term independent of $\dv u_k$, and 
\begin{align}
\dv \Sigma_{\dv z_k^{t+1}}^{-1} =& \dv \Sigma_{\dv u_k^t}^{-1} + \dv K_{\dv x|\dv u_k^t}^H \dv \Sigma_{\dv x| \dv u_k}^{-1}\dv K_{\dv x|\dv u_k^t}\nonumber\\
&
+\frac{1}{s}\dv K_{\dv x\dv u_{\mc K\setminus k}^t|\dv u_k^t}^H \dv \Sigma_{\dv x\dv u_{\mc K\setminus k}^t| \dv u_k}^{-1}\dv K_{\dv x\dv u_{\mc K\setminus k}^t|\dv u_k^t}\nonumber  \\
&
- \frac{1}{s} \dv K_{\dv u_{\mc K\setminus k}^t|\dv u_k^t}^H \dv \Sigma_{\dv u_{\mc K\setminus k}^t| \dv u_k}^{-1}\dv K_{\dv u_{\mc K\setminus k}^t|\dv u_k^t}\label{eq:SecondOrder}, \\
\boldsymbol\mu_{\dv u_{k}^{t+1}}=&\dv\Sigma_{\dv z_{k}^{t+1}}\left(  \dv K_{\dv x|\dv u_k^t}^H\dv \Sigma_{\dv x|\dv u_k^t}^{-1}\boldsymbol\mu_{\dv x|\dv y_k}\right.\nonumber\\
&+\frac{1}{s}\dv K_{\dv x,\dv u_{\mc K\setminus k}^t|\dv u_k^t} \dv \Sigma_{\dv x,\dv u_{\mc K\setminus k}^t|\dv u_k^t}^{-1}\boldsymbol\mu_{\dv x,\dv u_{\mc K\setminus k}^t|\dv y_k}\nonumber\\
&\left.-\frac{1}{s} \dv K_{\dv u_{\mc K\setminus k}^t|\dv u_k^t}\dv \Sigma_{\dv u_{\mc K\setminus k}^t|\dv u_k^t}^{-1}\boldsymbol\mu_{\dv u_{\mc K\setminus k}^t|\dv y_k}\label{eq:FirstOrder}\right). 
\end{align}
This shows that $p^{(t+1)}(\dv u_k|\dv y_k)$ is a Gaussian distribution and that $\dv U_{k}^{t+1}$ is distributed as $\dv U_{k}^{t+1}\sim \mc {CN}(\boldsymbol\mu_{\dv u_{k}^{t+1}},\dv\Sigma_{\dv z_{k}^{t+1}})$.

%=============================================================================================================================================
\begin{algorithm}
\caption{BA-type algorithm for the Gaussin Vector D-IB}\label{algo:BA_Gauss}
\begin{algorithmic}[1]
\smallskip
\Inputs{covariance ${\dv \Sigma}_{\dv x, \dv y_1,\ldots,\dv y_k}$, parameter $s\geq 0$.}
\Outputs{optimal pairs $(\dv A_k^{*},\dv \Sigma_{\dv z_k^{*}} )$, $k=1,\ldots, K$.}
\Initialize{Set randomly $\dv A_k^{0}$ and $\dv \Sigma_{\dv z_k^{0}} \succeq 0$, $k\in \mc K$.}
%\smallskip
\Repeat 
\State Compute $\dv \Sigma_{\dv y_k|\dv u_{\mc K\setminus k}^t}$ and update for $k\in \mc K$
\begin{align}
\dv \Sigma_{\dv u_k^t|\dv x} &= \dv A^{t}_k \dv \Sigma_{\dv y_k|\dv x} \dv A^{t,H}_k + \dv \Sigma_{\dv z_k^t}\label{eq:Cov_ux}\\
\dv \Sigma_{\dv u_k^t|\dv u_{\mc K\setminus k}^t} &= \dv A^{t}_k \dv \Sigma_{\dv y_k|\dv u_{\mc K\setminus k}^t} \dv A^{t,H}_k + \dv \Sigma_{\dv z_k^t},\label{eq:Cov_uus}
\end{align}

\State Compute $\dv \Sigma_{\dv z_k^{t+1}}$ as in \eqref{eq:SigmaUpdate} for $k\in \mc K$.
\State Compute $\dv A_k^{t+1}$ as \eqref{eq:AUpdate}, $k\in \mc K$.
\State $t \leftarrow t+1$.
\Until{convergence.}
\end{algorithmic}
\end{algorithm}

Next, we simplify \eqref{eq:SecondOrder} and \eqref{eq:FirstOrder} to obtain the update rules \eqref{eq:SigmaUpdate} and \eqref{eq:AUpdate}. From the matrix inversion lemma, similarly to \cite{journals/jmlr/ChechikGTW05}, for $(\dv X_1,\dv X_2)$ jointly Gaussian  we have 
\begin{align}\label{eq:InvLemma1}
\dv \Sigma_{\dv x_2|\dv x_1}^{-1} = \dv\Sigma_{\dv x_2}^{-1} + \dv K_{\dv x_1|\dv x_2}^{H}\dv\Sigma_{\dv x_1|\dv x_2}^{-1}\dv K_{\dv x_1|\dv x_2}.
\end{align}
Applying \eqref{eq:InvLemma1}, in \eqref{eq:SecondOrder} we have
\begin{align}
\dv \Sigma_{\dv z_k^{t+1}}^{-1} 
%=& \dv \Sigma_{\dv u_k^t}^{-1} + (\dv \Sigma_{\dv u_k^t|\dv x}^{-1}  - \dv \Sigma_{\dv u_k^t}^{-1} )\nonumber\\
%&
%+\frac{1}{s}(\dv \Sigma_{\dv u_k^t|\dv x \dv u_{\mc K\setminus k}^t}^{-1}  - \dv \Sigma_{\dv u_k^t}^{-1} )
%- \frac{1}{s} (\dv \Sigma_{\dv u_k^t|\dv u_{\mc K\setminus k}^t}^{-1}  - \dv \Sigma_{\dv u_k^t}^{-1} )\nonumber\\
&=\dv \Sigma_{\dv u_k^t|\dv x}^{-1} +\frac{1}{s}\dv \Sigma_{\dv u_k^t|\dv x \dv u_{\mc K\setminus k}^t}^{-1} - \frac{1}{s} \dv \Sigma_{\dv u_k^t|\dv u_{\mc K\setminus k}^t}^{-1}   
\label{eq:SecondOrder_subLemma},\\ 
&=\left(1+\frac{1}{s}\right)\dv \Sigma_{\dv u_k^t|\dv x}^{-1}  - \frac{1}{s} \dv \Sigma_{\dv u_k^t|\dv u_{\mc K\setminus k}^t}^{-1},\label{eq:SecondOrder_subLemma_MK}
\end{align}
where  \eqref{eq:SecondOrder_subLemma_MK} is due to the Markov chain $\dv U_k\mkv \dv X\mkv \dv U_{\mc K\setminus k }$.

Then, also from the matrix inversion lemma,  we have for jointly Gaussian $(\dv X_1,\dv X_2)$, 
\begin{align}\label{eq:InvLemma2}
\dv \Sigma_{\dv x_2|\dv x_1}^{-1} \dv \Sigma_{\dv x_1,\dv x_2} \dv\Sigma_{\dv x_1}^{-1} =\dv \Sigma_{\dv x_2}^{-1} \dv \Sigma_{\dv x_1,\dv x_2} \dv\Sigma_{\dv x_1|\dv x_2}^{-1}.
\end{align}
Applying \eqref{eq:InvLemma2} in \eqref{eq:FirstOrder}, for the first term, we have
\begin{align}
 \dv K_{\dv x|\dv u_k^t}^H\dv \Sigma_{\dv x|\dv u_k^t}^{-1}\boldsymbol\mu_{\dv x|\dv y_k}\!=\!&
%\dv \Sigma_{\dv u_k^t}^{-1}\dv\Sigma_{\dv x,\dv u_k}\dv \Sigma_{\dv x|\dv u_k^t}^{-1}\boldsymbol\mu_{\dv x|\dv y_k}\\
%=&
\dv \Sigma_{\dv u_k^t|\dv x}^{-1}\dv\Sigma_{\dv x,\dv u_k^t}\dv \Sigma_{\dv x}^{-1}\boldsymbol\mu_{\dv x|\dv y_k}\\
%=&
%\dv \Sigma_{\dv u_k^t|\dv x}^{-1}\dv  A_{k}^{t}\dv \Sigma_{\dv y_k,\dv x}\dv \Sigma_{\dv x}^{-1}\boldsymbol\mu_{\dv x|\dv y_k}\nonumber\\
=&
\dv \Sigma_{\dv u_k^t|\dv x}^{-1}\dv  A_{k}^{t}\dv \Sigma_{\dv y_k,\dv x}\dv \Sigma_{\dv x}^{-1}\dv\Sigma_{\dv x,\dv y_k}\dv \Sigma_{\dv y_k}^{-1}\dv y_k\nonumber\\
%\label{eq:eq:FirstOrder_eq_cov}\\
=&
\dv \Sigma_{\dv u_k^t|\dv x}^{-1}\dv  A_{k}^{t}(\dv I - \dv \Sigma_{\dv y_k|\dv x}\dv \Sigma_{\dv y_k}^{-1}) \dv y_k, \label{eq:eq:FirstOrder_eq_Invlemma1}
\end{align}
where $\dv\Sigma_{\dv x,\dv u_k^t}=\dv  A_{k}^{t}\dv \Sigma_{\dv y_k,\dv x}$; and \eqref{eq:eq:FirstOrder_eq_Invlemma1} is due to the definition of $\dv\Sigma_{\dv y_k |\dv x}$.
Similarly, for the second term, we have
\begin{align}
\dv K_{\dv x\dv u_{\mc K\setminus k}^t|\dv u_k^t}& \dv \Sigma_{\dv x\dv u_{\mc K\setminus k}^t|\dv u_k^t}^{-1}\boldsymbol\mu_{\dv x,\dv u_{\mc K\setminus k}^t|\dv y_k}\nonumber\\
%=&\dv \Sigma_{\dv u_k^t}^{-1}\dv\Sigma_{\dv x\dv u_{\mc K\setminus k}^t,\dv u_k}\dv \Sigma_{\dv x\dv u_{\mc K\setminus k}^t|\dv u_k^t}^{-1}\boldsymbol\mu_{\dv x\dv u_{\mc K\setminus k}^t|\dv y_k}\\
%=&
%\dv \Sigma_{\dv u_k^t|\dv x\dv u_{\mc K\setminus k}^t}^{-1}\dv\Sigma_{\dv x\dv u_{\mc K\setminus k}^t,\dv u_k}\dv \Sigma_{\dv x\dv u_{\mc K\setminus k}^t}^{-1}\boldsymbol\mu_{\dv x,\dv u_{\mc K\setminus k}^t|\dv y_k}\\
%=&
%\dv \Sigma_{\dv u_k^t|\dv x\dv u_{\mc K\setminus k}^t}^{-1}\dv  A_{k}^{t}\dv \Sigma_{\dv y_k,\dv x\dv u_{\mc K\setminus k}^t}\dv \Sigma_{\dv x\dv u_{\mc K\setminus k}^t}^{-1}\boldsymbol\mu_{\dv x,\dv u_{\mc K\setminus k}^t|\dv y_k}\nonumber\\
%=&
%\dv \Sigma_{\dv u_k^t|\dv x\dv u_{\mc K\setminus k}^t}^{-1}\dv  A_{k}^{t}\dv \Sigma_{\dv y_k,\dv x\dv u_{\mc K\setminus k}^t}\dv \Sigma_{\dv x\dv u_{\mc K\setminus k}^t}^{-1}\dv\Sigma_{\dv y_k,\dv x\dv u_{\mc K\setminus k}^t}\dv \Sigma_{\dv y_k}^{-1}\dv y_k\nonumber\\
=&
\dv \Sigma_{\dv u_k^t|\dv x\dv u_{\mc K\setminus k}^t}^{-1}\dv  A_{k}^{t}(\dv I - \dv \Sigma_{\dv y_k|\dv x\dv u_{\mc K\setminus k}^t}\dv \Sigma_{\dv y_k}^{-1}) \dv y_k,\\
=&
\dv \Sigma_{\dv u_k^t|\dv x}^{-1}\dv  A_{k}^{t}(\dv I - \dv \Sigma_{\dv y_k|\dv x}\dv \Sigma_{\dv y_k}^{-1}) \dv y_k, \label{eq:eq:FirstOrder_eq_MK_chain}
\end{align}
where we use $\dv\Sigma_{\dv u_k^t,\dv x\dv u_{\mc K\setminus k}^t}=\dv  A_{k}^t\dv \Sigma_{\dv y_k,\dv x\dv u_{\mc K\setminus k}^t}$; and \eqref{eq:eq:FirstOrder_eq_MK_chain} is due to the Markov chain $\dv U_k\mkv \dv X\mkv \dv U_{\mc K\setminus k }$.
For the third term,
\begin{align}
\dv K_{\dv u_{\mc K\setminus k}^t|\dv u_k^t}& \dv \Sigma_{\dv u_{\mc K\setminus k}^t|\dv u_k^t}^{-1}\boldsymbol\mu_{\dv u_{\mc K\setminus k}^t|\dv y_k}\nonumber\\
=&
\dv \Sigma_{\dv u_k^t|\dv u_{\mc K\setminus k}^t}^{-1}\dv  A_{k}^{t}(\dv I - \dv \Sigma_{\dv y_k|\dv u_{\mc K\setminus k}^t}\dv \Sigma_{\dv y_k}^{-1}) \dv y_k.
\end{align}

Equation \eqref{eq:AUpdate} follows by noting that $\boldsymbol\mu_{\dv u_{k}^{t+1}} = \dv A_{k}^{t+1}\dv y_k$, and that from \eqref{eq:FirstOrder} $\dv A_{k}^{t+1}$ is given as in \eqref{eq:AUpdate}. 

Finally, we note that due to \eqref{eq:testChan}, $\dv\Sigma_{\dv u_k|\dv x}$ and $\dv\Sigma_{\dv u_k|\dv u_{\mc K\setminus k}^t}$ are given as in \eqref{eq:Cov_ux} and \eqref{eq:Cov_uus}, where $\dv\Sigma_{\dv y_k|\dv x}=\dv\Sigma_{\dv n_k}$ and $\dv\Sigma_{\dv y_k|\dv u_{\mc K\setminus k}^t}$ can be computed from its definition.

\section{Numerical Results}

In this section, we consider the numerical evaluation of Algorithm~\ref{algo:BA_Gauss}, and compare the resulting relevant information to two upper bounds on the performance for the D-IB: i) the information-rate pairs achievable under centralized IB encoding,  i.e., if $(Y_1,\ldots, Y_K)$ are encoded jointly at a rate equal to the total rate $R_{\mathrm{sum}}= R_1+\cdots+R_{K}$, characterized in \cite{journals/jmlr/ChechikGTW05}; ii) the information-rate pairs achievable under centralized IB encoding when $R_{\mathrm{sum}}\rightarrow \infty$, i.e., $\Delta = I(X;Y_1,\ldots, Y_K)$.

Figure \ref{fig:Figure1} shows the resulting $(\Delta,R_{\mathrm{sum}})$ tuples for a Gaussian vector model with $K=2$ encoders, source dimension $N=4$, and observations dimension $M_1 = M_2 = 2$ for different values of $s$ calculated as in Proposition~\ref{prop:param} using Algorithm~\ref{algo:BA_Gauss}, and its upper convex envelope. As it can be seen, the distributed IB encoding of sources performs close to the Tishby's centralized IB method, particularly for low $R_{\mathrm{sum}}$ values. Note the discontinuity in the curve caused by a dimensionality change in the projections at the encoders. 

\vspace{-2mm}

\appendices
\section{Proof Outline of Theorem~
\ref{th:GaussSumCap}
}\label{app:GaussSumCap}
Let $(\mathbf{X},\mathbf{U})$ be two complex random vectors. The conditional Fischer information is defined as
%\begin{align}
$\mathbf{J}(\mathbf{X}|\mathbf{U}) := \mathrm{E}[\nabla \log p(\mathbf{X}|\mathbf{U})\nabla\log p(\mathbf{X}|\mathbf{U})^H]$,
%\end{align}
and the MMSE is given by 
%\begin{align}
$\mathrm{mmse}(\mb X|\mb U) := \mathrm{E}[(\mb X-\mathrm{E}[\mb X|\mb U])(\mb X-\mathrm{E}[\mb X|\mb U])^H]$. Then~\cite{Ekrem:IT:2014:OuterBoundCEO}
%\end{align}
\begin{align}
\hspace{-3.1mm}\log|(\pi e) \mathbf{J}^{-1}(\mathbf{X}|\mathbf{U})|\!\leq\! h(\mathbf{X}|\mathbf{U})\!\leq\! \log|(\pi e) \mathrm{mmse}(\mb X|\mb U)|.\hspace{-2mm}\label{lem:FI_Ineq}
\end{align}

%\begin{lemma}\label{lemm:Brujin}
%Let $(\mathbf{V}_1,\mathbf{V}_2)$ be a random vector with finite second moments and $\mathbf{N}\!\sim\!\mc{CN}(\dv 0, \boldsymbol\Lambda_N)$ independent of $(\mathbf{V}_1,\mathbf{V}_2)$. Then
%\begin{align}
%\mathrm{mmse}(\mathbf{V}_2|\mathbf{V}_1,\mathbf{V}_2+\mathbf{N}) = \boldsymbol\Lambda_N -\boldsymbol\Lambda_N\mathbf{J}(\mathbf{V}_2+\mathbf{N}|\mathbf{V}_1)\boldsymbol\Lambda_N\nonumber.
%\end{align}
%\end{lemma}

We outer bound the information-rate region in Theorem~\ref{th:MK_C_Main} for $(\dv X,\dv Y_{\mc K})$ as in \eqref{mimo-gaussian-model}. 
For $ q\in \mc{Q}$ and fixed $\prod_{k=1}^{K}p(\mathbf{u}_k|\mathbf{y}_k,q)$, choose $\mathbf{B}_{k,q}$, $k\in \mc K$ satisfying $\mathbf{0}\preceq\mathbf{B}_{k,q}\preceq\mathbf{\Sigma}_{\mb n_k}^{-1}$ such that 
\begin{align}
\mathrm{mmse}(\mathbf{Y}_k|\mathbf{X}, \mathbf{U}_{k,q},q) = \mathbf{\Sigma}_{\mb n_k}-\mathbf{\Sigma}_{\mb n_k}\mathbf{B}_{k,q}\mathbf{\Sigma}_{\mb n_k}.\label{eq:covB}
\end{align}
Such $\mathbf{B}_{k,q}$ always exists since $ \mathbf{0}\preceq\mathrm{mmse}(\mathbf{Y}_k|\mathbf{X},\mathbf{U}_{k,q},q)\preceq \mathbf{\Sigma}_{\mb n_k}^{-1}$, for all $q\in \mc Q$, and $k\in \mc K$. We have from \eqref{eq:MK_C_Main},
\begin{align}
I(\mathbf{Y}_k;\mathbf{U}_k|\mathbf{X},q)
%& = \log|(\pi e)\boldsymbol\Sigma_{\mb n_k}| -h(\mathbf{Y}_k|\mathbf{X},\mathbf{U}_{k,q},Q=q)\\
& \geq \log|\boldsymbol\Sigma_{\mb n_k}| -\log|\mathrm{mmse}(\mathbf{Y}_k|\mathbf{X},\mathbf{U}_{k,q},q) |\nonumber\\
&= - \log|\dv I-\mathbf{\Sigma}_{\mb n_k}^{1/2}\mathbf{B}_{k,q}\mathbf{\Sigma}_{\mb n_k}^{1/2}|,\label{eq:firstIneq}
\end{align}
where the inequality is due to \eqref{lem:FI_Ineq}, and \eqref{eq:firstIneq} is due to \eqref{eq:covB}.
Let $\bar{\mathbf{B}}_k:= \sum_{q\in \mathcal{Q}}p(q)\mathbf{B}_{k,q}$. Then, we have from \eqref{eq:firstIneq}
\begin{align}
I(\mathbf{Y}_k;\mathbf{U}_k|\mathbf{X},Q) 
%&= \sum_{q\in \mathcal{Q}}p(q)I(\mathbf{Y}_k;\mathbf{U}_k|\dv X,q)\\
&\geq -  \sum_{q\in \mathcal{Q}}p(q) \log|\dv I-\mathbf{\Sigma}_{\mb n_k}^{1/2}\mathbf{B}_{k,q}\mathbf{\Sigma}_{\mb n_k}^{1/2}|\nonumber \\%\label{eq:logDetProp1}\\
%&\geq - \log|\dv I-\mathbf{\Sigma}_{\mb n_k}^{1/2}\sum_{q\in \mathcal{Q}}p(q)\mathbf{B}_{k,q}\mathbf{\Sigma}_{\mb n_k}^{1/2}|\label{eq:logDetProp2}\nonumber\\
&\geq-\log|\dv I-\mathbf{\Sigma}_{\mb n_k}^{1/2}\bar{\mathbf{B}}_k\mathbf{\Sigma}_{\mb n_k}^{1/2}|,\label{eq:logDetProp2}
\end{align}
where  \eqref{eq:logDetProp2} follows from the concavity of the log-det function and Jensen's inequality. 
On the other hand, we have
\begin{align}
I(\mathbf{X};\mathbf{U}_{S^c,q}|q)
%=h(\mathbf{X})-h(\mathbf{X}|\mathbf{U}_{S^c,q},q)\\
&\leq \log|\mathbf{\Sigma}_{\mb x} |-\log|\mathbf{J}^{-1}(\mathbf{X}|\mathbf{U}_{S^c,q},q)|\label{eq:FI_Ineq},\\
& = \log 
\left| \sum_{k\in\mathcal{S}^{c}}\mathbf{\Sigma}_{\mb x}^{1/2}\mathbf{H}_{k}^{H}
\mathbf{B}_{k,q}
\mathbf{H}_{k}\mathbf{\Sigma}_{\mb x}^{1/2}\!+\!\mathbf{I}\right|\label{eq:secondtIneq},
\end{align}
where \eqref{eq:FI_Ineq} is due to \eqref{lem:FI_Ineq}; and \eqref{eq:secondtIneq} is due to to the following equality, which can be proved using the
connection between the MMSE matrix \eqref{eq:covB} and the Fisher information along the lines of 
\cite{Ekrem:IT:2014:OuterBoundCEO, DBLP:journals/corr/ZhouX0C16, EZSC:ISIT:2017} (We refer to \cite{Estella:IZS:2017:Proofs} for details):
\begin{align}
\mathbf{J}(\mathbf{X}|\mathbf{U}_{S^c,q},q) = \sum_{k\in\mathcal{S}^{c}}\mathbf{H}_{k}^{H}
\mathbf{B}_{k,q}
\mathbf{H}_{k}+\mathbf{\Sigma}_{\mb x}^{-1}\label{eq:Fischerequality}.
\end{align}

Similarly to \eqref{eq:logDetProp2}, from \eqref{eq:secondtIneq} and Jensen's Inequality we have
\begin{align}
I(\mathbf{X};\mathbf{U}_{S^c}|Q)
%&\leq\sum_{q\in \mathcal{Q}}p(q) \log\left| \sum_{k\in\mathcal{S}^{c}}\mathbf{\Sigma}_{\mb x}^{1/2}\mathbf{H}_{k}^{H}
%\mathbf{B}_{k,q}
%\mathbf{H}_{k}\mathbf{\Sigma}_{\mb x}^{1/2}+\mathbf{I}\right|\label{eq:secondtIneq_app}\\
%&\geq \log 
%\left|(\pi e)\left(\sum_{q\in \mathcal{Q}}p(q) \sum_{k\in\mathcal{S}^{c}}\mathbf{H}_{k,\mathcal{T}}^{H}
%\mathbf{B}_{k,q}
%\mathbf{H}_{k,\mathcal{T}}+ \mathbf{\Sigma}_{\mb x_{\mc T}|\mb x_{\mc T^c}}^{-1}\right)^{-1}\right|\label{eq:secondtIneq_Jensen}\\
&\leq
\log\left| \sum_{k\in\mathcal{S}^{c}}\mathbf{\Sigma}_{\mb x}^{1/2}\mathbf{H}_{k}^{H}
\bar{\mathbf{B}}_{k}
\mathbf{H}_{k}\mathbf{\Sigma}_{\mb x}^{1/2}+\mathbf{I}\right|
%\log \left|(\pi e)\left(\sum_{k\in\mathcal{S}^{c}}\mathbf{H}_{k,\mathcal{T}}^{H}
%\bar{\mathbf{B}}_{k}
%\mathbf{H}_{k,\mathcal{T}}+\mathbf{\Sigma}_{\mb x_{\mc T}|\mb x_{\mc T^c}}^{-1}\right)^{-1}\right|
\label{eq:secondtIneq_4}. 
\end{align}
%where \eqref{eq:secondtIneq_app} follows from \eqref{eq:secondtIneq}; \eqref{eq:secondtIneq_4} is due to the concavity of the log-det function and Jensen's inequality.

Substituting \eqref{eq:logDetProp2} and \eqref{eq:secondtIneq_4} in \eqref{eq:MK_C_Main}
and letting $\mathbf{B}_k := \mathbf{\Sigma}_{\mb n_k}^{-1/2}\bar{\mathbf{B}}_k\mathbf{\Sigma}_{\mb n_k}^{-1/2}$ gives the desired outer bound.
The proof is completed by noting that the outer bound is achieved with $Q= \emptyset$ and $p^{*}(\dv u_k|\dv y_k,q) = \mc{CN}(\dv y_k, \dv \mathbf{\Sigma}_{\mb n_k}^{1/2}(\dv B_k-\dv I)\mathbf{\Sigma}_{\mb n_k}^{1/2}  ) $.

%----------------------------------------
\begin{figure}[t!]
\centering
\includegraphics[width=0.475\textwidth]{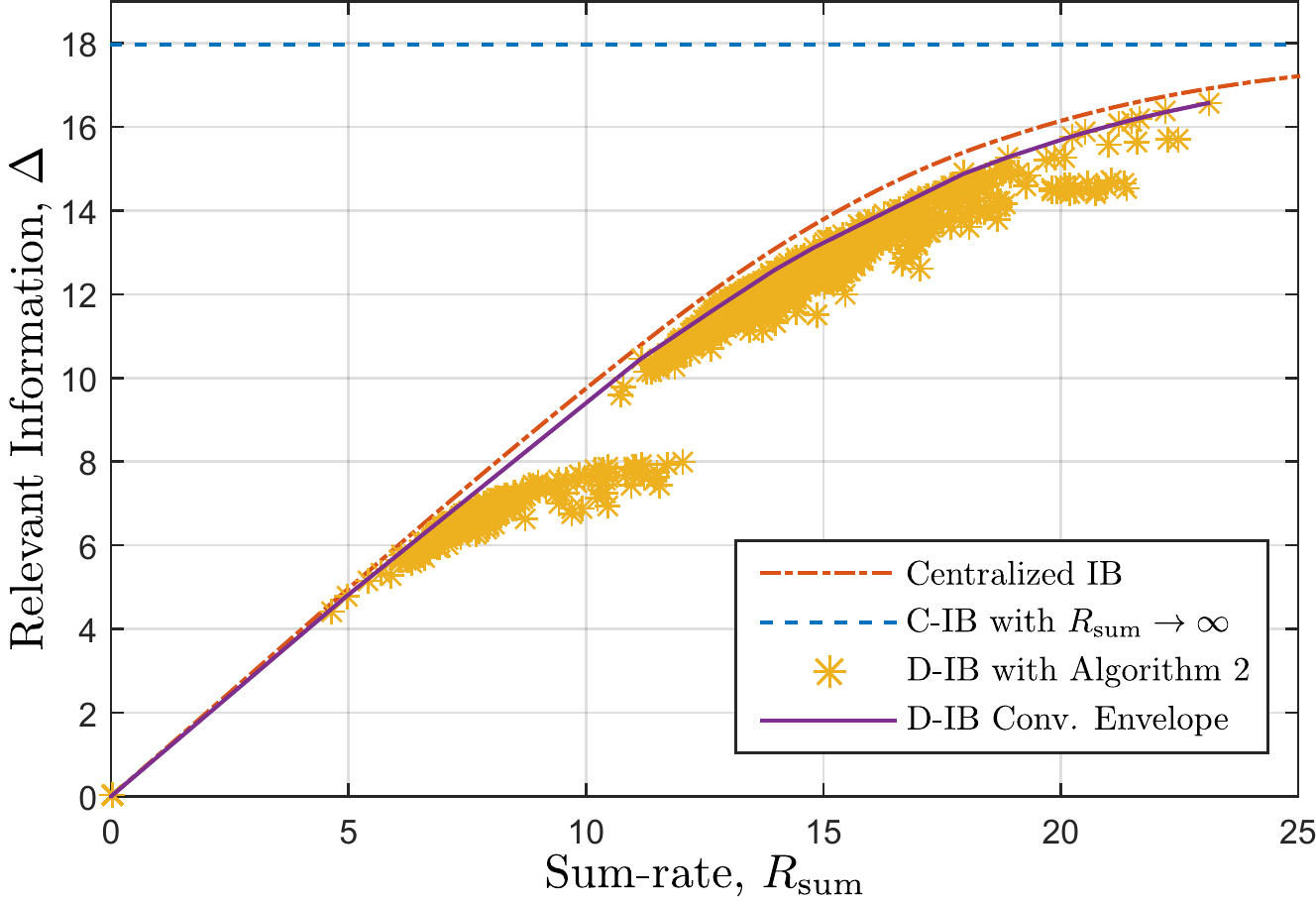}
\vspace{-4mm}
\caption{Information vs. sum-rate for  vector Gaussian D-IB with $K=2$ encoders, source dimension $N=4$, and observation dimension \mbox{$M_1 = M_2 = 2$.}} 
\label{fig:Figure1}
\vspace{-5mm}
\end{figure}

\vspace{-3mm}

\bibliographystyle{ieeetran}
\bibliography{ref}

\end{document}